\documentclass[a4paper,UKenglish,cleveref,autoref,thm-restate]{lipics-v2021}

\usepackage{mathtools}
\usepackage{stmaryrd}
\usepackage{bbold}
\usepackage{todonotes}
\usepackage{xspace}
\usepackage{quiver}
\usepackage{mathpartir}
\usepackage[new]{old-arrows}
\usepackage{wasysym}

\usepackage[autostyle]{csquotes}

\usepackage[outline]{contour}
\contourlength{0.2pt}

\hideLIPIcs

\title{Type Theory With Erasure}

\author{Constantine Theocharis}{University of St Andrews}{kt81@st-andrews.ac.uk}{https://orcid.org/0000-0002-9734-367X}{}
\author{Edwin Brady}{University of St Andrews}{ecb10@st-andrews.ac.uk}{https://orcid.org/0009-0001-0198-2750}{}

\authorrunning{C. Theocharis and E. Brady}

\Copyright{Constantine Theocharis and Edwin Brady}

\ccsdesc[500]{Theory of computation~Type theory}

\keywords{Type theory, erasure, dependent types, compilation, synthetic phase distinction, higher-order abstract syntax, logical frameworks}

\EventEditors{Frank Pfenning}
\EventNoEds{1}
\EventLongTitle{11th International Conference on Formal Structures for Computation and Deduction (FSCD 2026)}
\EventShortTitle{FSCD 2026}
\EventAcronym{FSCD}
\EventYear{2026}
\EventDate{July 20--23, 2026}
\EventLocation{Lisbon, Portugal}
\EventLogo{}
\SeriesVolume{378}
\ArticleNo{9}

\supplementdetails{Implementation}{https://github.com/kontheocharis/erasure-impl}
\supplementdetails{Formalisation}{https://github.com/kontheocharis/erasure-agda}

\acknowledgements{We thank András Kovács, Szumi Xie, Bhakti Shah, Na\"im Camille Favier and the anonymous reviewers for valuable discussions and feedback.}

\nolinenumbers

\newcommand{\s}[1]{{\ensuremath{\textsf{\upshape #1}}}\xspace}

\newcommand{\cl}[1]{{\ensuremath{\mathcal{#1}}}\xspace}
\newcommand{\un}[1]{{\ensuremath{\texttt{\upshape #1}}}\xspace}

\newcommand{\re}[1]{{\ensuremath{\mathbf{#1}}}\xspace}
\newcommand{\std}[1]{\ensuremath{\mathbb{#1}}\xspace}

\newcommand{\mth}[1]{\ensuremath{\textbf{\upshape #1}}\xspace}

\newcommand{\Set}{\mth{Set}}

\newcommand{\Psh}[1]{\ensuremath{\re{Psh}({#1})}\xspace}

\newcommand{\yoneda}{\s{y}}

\newcommand{\col}[1]{\stackrel{#1}{:}}

\newcommand{\Con}{\s{Con}}
\newcommand{\Sub}{\s{Sub}}

\newcommand{\Ty}{\s{Ty}}

\newcommand{\Tm}{\s{Tm}}

\newcommand{\univ}{\cl{U}}
\newcommand{\univR}{\cl{U}_{\mathsf{R}}}

\newcommand{\Unit}{\mathbb{1}}

\newcommand{\llb}[1]{\llbracket{#1}\rrbracket}

\newcommand{\ZTT}{\ensuremath{\s{TT}_0}\xspace}

\newcommand{\TT}{\ensuremath{\s{TT}}\xspace}

\newcommand{\lR}{_\s{R}}

\newcommand{\amazing}[2]{\sqrt[\uproot{4}{#1}]{#2}}
\newcommand{\emb}[1]{\ulcorner{#1}\urcorner}
\newcommand{\lemb}[1]{\llcorner{#1}\lrcorner}

\def\arrowscale{0.8}
\def\arrowshift{0.9pt}
\newcommand{\arrowwrap}[2]{%
  \text{\raisebox{\arrowshift}{\scalebox{\arrowscale}{{\contour[32]{black}{$#1$}}$_{#2}\,$}}}%
}
\newcommand{\arrowwrapalt}[2]{%
  \text{\raisebox{\arrowshift}{\scalebox{\arrowscale}{{\contour[32]{black}{$#1$}}$_{#2}$}}}%
}
\newcommand{\uptm}[1][]{\arrowwrap{\uparrow}{#1}}
\newcommand{\downtm}[1][]{\arrowwrap{\downarrow}{#1}}

\newcommand{\upsub}{\arrowwrapalt{\upuparrows}{}}

\newcommand{\agdaicon}[1]{{\upshape\bfseries{\href{#1}{\raisebox{-0.2\fontcharht\font`I}{\includegraphics[height=1.3\fontcharht\font`I]{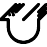}}}}}}

\begin{document}

\maketitle

\begin{abstract}
Erasure enriches type theory with a distinction between runtime relevant and
irrelevant data, allowing the compilation step to safely erase the latter.
Versions of this feature are implemented by many systems, including Agda, Idris,
and Rocq. We present a structural version of type theory with erasure,
formulated as a second-order generalised algebraic theory (SOGAT). Erasure is
encoded as a phase distinction between runtime and erased terms, in the form of
a proposition that can appear in a context. This formulation has several
advantages: it has models based on categories with families, is compatible with
other structural features such as staging, and provides a better guideline for
implementation. Through the model theory of SOGATs, we study the semantics of
type theory with erasure in families of sets, which generalises to any Grothendieck
topos equipped with a tiny proposition. We establish conservativity over
Martin-L\"of type theory (MLTT) in both phases. For code extraction, we construct a
presheaf model that produces untyped lambda calculus programs and prove its
correctness through gluing. Our results are formalised in Agda and we provide a
toy elaborator implementation.
\end{abstract}

\section{Introduction}\label{sec:intro}

When types depend on values, it becomes unclear which parts of a program are
needed at runtime. One can perform whole-program analysis steps to heuristically
detect and remove the majority of data that is not computationally relevant
\cite{Brady2004-ay,Tejiscak2020-rg}. However, the unbounded abstraction
capabilities of dependent types make such techniques brittle and unpredictable
for more complicated examples. What seems to have become the standard way to
erase runtime-irrelevant data is to enrich the underlying type theory with a
separate sort for terms to be erased by compilation. One popular approach
is quantitative type theory (QTT) \cite{McBride2016-oa,Atkey2018-pj}. Parameterised by a
\emph{quantitative semiring}, it encapsulates not only erasure but also
various forms of substructural variable usage, including linearity. When
instantiated to the ordered semiring $\{0 < \omega\}$, the resulting theory
could be described as `Martin-L\"of type theory (MLTT) with erasure'. This approach is
implemented by Agda \cite{agda-irrelevance} and Idris \cite{idris-quantities}.
Besides this, there is a mechanism of `ghost sorts' \cite{Winterhalter2024-pw}
planned for Rocq \cite{Winterhalter2025-tt} that achieves
similar goals.

We contribute yet another approach to erasure, in the style of algebraic,
reduction-free `type theory in type theory' \cite{Altenkirch2016-zc}. The main
difference over previous work is that our formulation is fully structural,
corresponding to a well-studied class of type theories identified by
Uemura~\cite{Uemura2021-jq}. As a result, we can rely on certain aspects of its
metatheory and its implementation that are already understood in the general
setting. We present erasure as a modular \emph{second-order generalised
algebraic theory} (SOGAT) extending MLTT, compatible with cumulative universes
and inductive families.

From this formulation, we derive both syntactic and semantic results. On the
syntactic side, we establish conservativity over MLTT
(\cref{thm:erased-cons,thm:rt-cons}) and the independence of runtime data from
erased data (\cref{thm:need-nothing}). The former ensures that erasure does not
alter the proving power of type theory, while the latter is our main theorem of
`well-behavedness'.

On the semantic side, we extend the standard interpretation of type theory in
terms of sets and functions to account for erasure (\cref{def:std-model}). This
is possible when there is a tiny proposition available in the interpretation
category. We give a fully worked example in the category of families of sets
and discuss how it extends to Grothendieck toposes.

We then construct a code extraction model (\cref{def:code-extr}) producing
untyped lambda calculus terms that only retain runtime data, along with a
correctness proof by gluing that extracted code always tracks the standard
interpretation (\cref{thm:non-inter}). This is intended to be implemented as
part of the compilation pipeline post-typechecking.

We provide a demo implementation elaborating a high-level language with erased
binders similar to Idris or Agda into the core language presented here
(\cref{sec:implementation}), and extracting untyped lambda calculus terms in the
end. We include supplementary notes on the implementation of pattern unification
for metavariables. Our theoretical results are formalised in Agda; the symbol
\agdaicon{https://cthe.me/redirects/erasure/repo} appearing throughout
the paper is a hyperlink to the relevant part of the formalisation.

\section{An informal presentation of erasure} \label{sec:informal-itt}

Here we present erasure informally through a high-level dependently typed
language to be elaborated into a yet unspecified core syntax. This behaves
essentially the same as Idris/Agda with \un{0}/\un{@0} annotations. For now, we
use type-in-type for simplicity.

\subparagraph{Modes and usages}

Terms are split into the \emph{runtime} mode ($\omega$) and the
\emph{erased} mode ($0$):
\begin{itemize}
    \item runtime terms are written as $a \col{\omega} A$
    \item erased terms are written as $a \col{0} A$
    \item every runtime term can be used as an erased term
\end{itemize}
We order these as $0 < \omega$. This asymmetry reflects that at compile-time
everything is accessible but at runtime only non-erased data is. This is a
different kind of compile-time/runtime distinction from program staging or
metaprogramming. A compile-time term will not necessarily have a \emph{known
value} at compile-time; rather, it will not survive \emph{past} the compilation
phase. The runtime phase comes after the compilation phase, so any data that
reaches that point is also accessible before it -- at compile-time. Variables
are also annotated with a mode $\{0, \omega\}$. We will generally write $a : A$
for $a \col{\omega} A$, explicitly indicating only erased terms. In the core
language the subusaging rule will be implemented by an explicit coercion
mechanism (\cref{sub:phase}).

\subparagraph{Functions}\label{ssub:dep-funcs} We write $(x \col{i} A) \to B$
for the type of dependent functions from $A$ to $B$ where the domain is at mode
$i \in \{0, \omega\}$. A function with a runtime domain $(x : A) \to B$ survives
compilation, while a function with an erased domain $(x \col{0} A) \to B$ is
compiled to its return value directly. In the latter case, the rules of the
theory ensure that the return value will not have any dependency on the input
$x$ that can be observed at runtime. For example, this results in every closed
function $(n \col{0} \s{Nat}) \to \s{Nat}$ being compiled to a constant numeral
(\cref{thm:non-inter}).

\subparagraph{Universes}\label{ssub:universes} The universe of types, denoted
$\s U$, is not computationally relevant: we only need an erased code for a type
to decode it. In other words, if $A \col{0} \s U$ then $A$ is a type. For type
dependency, such as in $(x \col{i} A) \to B$, regardless of the mode $i$ of the
bound variable, it suffices for $B$ to be applied to an erased $x \col{0} A$ to
yield a type. This means there is an equivalence between type families $(x
\col{0} A) \to \s{U}$ and $(x \col{\omega} A) \to \s{U}$.

\begin{example}
 The identity function $\lambda \, \{A\} \, x.\ x : \{A \col{0} \s{U}\} \to A \to A$
  extracts to the untyped identity function $\lambda x.\ x$, erasing the type argument $A$.
  Here we freely use the notation $\{x \col{i} A\} \to B$ for an implicit
  function type, which is merely an elaboration feature.
\end{example}

\begin{example}\label{ex:vec}
The signature describing length-indexed lists, or vectors, is
\[\begin{array}{ll}
 & \s{Vec} : (A  \col{0} \s{U}) \to (n \col{0} \s{Nat}) \to \s{U} \\
 & \s{nil} : \{A  \col{0} \s{U}\} \to \s{Vec}\ A\ \s{zero} \\
 & \s{cons} : \{A \col{0} \s{U}\} \to \{n \col{0} \s{Nat}\} \to (x : A) \to \s{Vec}\ A\ n \to \s{Vec}\ A\ (\s{succ}\ n) \\
\end{array}\] This is a standard example of erasure, where vectors do not store
their length at runtime due to the usage of $0$ in the $n$ argument of
$\s{cons}$. The types are also all erased.
\end{example}

\subparagraph{Pairs}\label{ssub:dep-pairs} Similar to functions, $(x \col{i} A)
\times B$ is the type of mode-aware dependent pairs; if $i = 0$ then the
first projection is erased, and if $i = \omega$ it exists at runtime.  We can also extend this to customise the mode of
the second projection $(x \col{i} A) \times^{j} B$, which is equivalent to $(x
\col{i} A) \times ((\_ \col{j} B) \times \s{Unit})$ but with a better
runtime representation in the case where $j = 0$.

\begin{example}
Erased dependent pairs capture the situation where we have some index that needs
to be packaged existentially but which we never access at runtime. For example, we can
package the erased length of a vector, to get a list:
\[\begin{array}{l}
  \s{List}\ A: \s{U} \\
  \s{List}\ A = (n \col{0} \s{Nat}) \times \s{Vec}\ A\ n \end{array} \] At
runtime, only the second projection of an erased pair survives, so if defined
this way, lists and vectors have the same runtime representation.
\end{example}

\begin{example}
If the second projection is erased instead, this captures the situation where
we want to carry a proof about a runtime object that shouldn't exist at runtime.
We can define the type of natural numbers less than $n$ by
\[\begin{array}{l}
\s{Fin} : \s{Nat} \to \s{U} \\
\s{Fin}\ n = (k : \s{Nat}) \times_0 \s{Lt}\ k\ n \end{array} \] for an
appropriately defined less-than predicate $\s{Lt} : (k : \s{Nat}) \to (n :
\s{Nat}) \to \s{U}$.
\end{example}

\subparagraph{Inductive types}
Inductive types are carried over from MLTT; the signature in \cref{ex:vec} is a
valid such instance. The only new rule is that the mode of the scrutinee in
eliminators must be \emph{at least as strong} as the mode of the output. In
other words, we cannot pattern match on compile-time data at runtime. Besides
that, there is a new axis of choices from the ability to mark parts of inductive
types as erased. We can mark data arguments or recursive arguments as erased, or
even entire constructors. This is explored in \cref{ssub:inductive}.

\begin{example}
For booleans, given $x \col{i} \s{Bool}$ and $a,\ b \col{j} A$ where
$i \geq j$ we have
\[
  \s{if}\ x\ \s{then}\ a\ \s{else}\ b \col{j} A \,.
\]
This means that if we are in erased mode, we can perform
case analysis on any boolean, but if we are in runtime mode, we
can only perform case analysis on a runtime boolean.
\end{example}

\subsection{Erasure as a phase distinction}\label{sub:phase}

When we transition from the surface language to the core language, the
subusaging rule of erasure is explained through a \emph{synthetic phase
distinction}. The concept of `phase distinction' originates from Cardelli
\cite{Cardelli1988-zp}, later reformulated synthetically by Sterling and
Harper~\cite{Sterling2021-pm}.

A synthetic phase distinction is simply an \emph{abstract} proposition $\#$,
meaning a propositional sort of our type theory to which we do not attach a
specific truth value. We are allowed to bind proofs of $\#$ in contexts, but
there is no way to produce a closed proof of $\#$, or store proofs of $\#$
inside terms or types. If a context $\Gamma$ contains the proposition $\#$, we
write $\# \in \Gamma$, which itself is a \emph{decidable} proposition in the
syntax.

\begin{example}
Using $\rhd$ for context extension and $\bullet$
for the empty context, we have:
$$
\# \in (\bullet \rhd x \col{0} \s{Nat} \rhd \# \rhd y \col{\omega} \s{Fin}\ x) \qquad
\# \notin (\bullet \rhd x \col{0} \s{Nat} \rhd y \col{\omega} \s{Fin}\ x) 
$$
In total, we have three kinds of context extension:
extending by a runtime term variable, extending by an erased term variable, or extending
by a $\#$ proof variable.
\end{example}

We call the propositional sort $\#$ the \emph{erasure marker}, because a context
is considered erased when $\# \in \Gamma$. In such an erased context $\Gamma$,
runtime and erased terms become interchangeable through two new constructors
$\uptm$ and $\downtm$. The precise rules surrounding these are:
\begin{itemize}
    \item \label{item:marker-1} If $a \col{0} A$ and $\# \in \Gamma$,
      then ${\uptm a} \col{\omega} A$.
    \item \label{item:marker-2} If $a \col{\omega} A$ in context $\Gamma \rhd \#$, then ${\downtm a}
      \col{0} A$ in context $\Gamma$.
    \item \label{item:marker-3} $\uptm$ and $\downtm$ are mutual inverses up to definitional equality.
\end{itemize}
To provide an erased term, it suffices to provide a runtime term under the
assumption of $\#$ -- a uniqueness property similar to `every function is a
lambda' and `every product is a pair'. Crucially, $\#$ cannot be introduced or
discharged in any other way (it is not a type!). The justification for the first point above is that
$\# \in \Gamma$ holds only when we are already in a \emph{subterm} of an erased
term. This way, we emulate the \emph{phase distinction requirement} of Cardelli
that every subterm of a compile-time term should itself be compile-time.

This mechanism allows us to treat \emph{any} runtime term as erased, eliminating
the need to separately axiomatise the erased variant of a term when the runtime
variant is already present in the theory. The decidability of $\# \in \Gamma$
means we can build an elaboration algorithm which inserts all $(\uptm,\downtm)$
coercions (\cref{sec:implementation}). The high-level syntax we have presented
thus far is the input to this elaboration.

\section{Formal setup} \label{sec:formal}

In this section we formally develop type theory with erasure, assuming
familiarity with category theory and in particular categories with families
(CwFs) \cite{Castellan2019-sh}. We work in a constructive intensional type
theory with a bounded cumulative hierarchy of universes $\Set_\ell$ ($\ell \leq
\omega + 1$), natural numbers $\std N$, function extensionality, uniqueness of
identity proofs, and quotient inductive-inductive types
\cite{Altenkirch2018-nd}. We write $\re{Psh}_{\ell}(C)$ for the category of
presheaves over $C$ valued in $\Set_\ell$. In general, when we omit $\ell$, we
mean $\ell = \omega$. We also omit some equality transports for readability.
When working internally to categories, we overload type-theoretic notation; $(x
: A) \to B$ might denote a function type in a presheaf category
\cite{Hofmann1997-on} or in some theory of signatures (\cref{par:gat}). We use
$\univ$ for internal universes, mentioning this explicitly when necessary. We
write \ZTT for type theory with erasure and \TT for ordinary Martin-L\"of type
theory. Rather than fixing type formers, we develop the theory modularly. For
example, `\TT with $\Pi$-types' denotes MLTT with only function types.

\subparagraph{Generalised algebraic theories}\label{par:gat}

A generalised algebraic theory (GAT) \cite{Cartmell1986-ig} is a description of
a theory consisting of sorts, operations and equations which are allowed to
appear in arbitrary order where dependency is permitted. Such a description can
be neatly given by a context in a certain dependent type theory called the
\emph{GAT theory of signatures} \cite{Kovacs2023-gq}. This supports $\std 1$,
$\Sigma$, equality types with UIP, a universe $\univ$ for declaring sorts, and
dependent function types $\Pi$ with domain in $\univ$, and $\Pi_\s{ext}$ with an
external type (e.g. $\std N$) as domain. GATs can be used to describe type
theories in an intrinsically well-formed fashion.
\begin{example}\label{ex:cat1}
The GAT $\s{Cat}_1$ of a category with a terminal object is given by:
\[\begin{array}{l@{\ }l}
  \s{Con} &: \univ \\
  \s{Sub} &: \s{Con} \to \s{Con} \to \univ \\
  \bullet &: \s{Con} \\
  \epsilon &: \s{Sub}\ \Gamma\ \bullet \\
  \epsilon\eta &: (\sigma : \s{Sub}\ \Gamma\ \bullet) \to \sigma = \epsilon
\end{array}
\qquad
\begin{array}{l@{\ }l}
  \s{id} &: \s{Sub}\ \Gamma\ \Gamma \\
  -\circ- &: \s{Sub}\ \Phi\ \Delta \to \s{Sub}\ \Gamma\ \Phi \to \s{Sub}\ \Gamma\ \Delta \\
  \s{id}\circ &: \s{id} \circ \sigma = \sigma \\
  \circ\s{id} &: \sigma \circ \s{id} = \sigma \\
  \s{assoc} &: (\sigma \circ \tau) \circ \rho = \sigma \circ (\tau \circ \rho)
\end{array}\]
\end{example}

\begin{definition}[$\Sigma$-CwF]
A \emph{$\Sigma$-CwF}\label{def:sigma-cwf} is a CwF
equipped with $\std 1$ and $\Sigma$ types. A morphism of $\Sigma$-CwFs is a
strict CwF morphism preserving $\Unit$ and $\Sigma$, inducing a category
$\re{CwF}_{\Sigma} \subseteq \re{CwF}$.
\end{definition}

Each GAT $G$ gives rise to a \emph{freely generated} $\Sigma$-CwF, where base
types are determined by the sorts of $G$, and base terms by the operations of
$G$ quotiented by the equations in $G$. This is the approach of Bocquet
\cite{Bocquet2025-ox}, which equips GATs with \emph{functorial semantics}
\cite{Lawvere1963-qd}. We identify $G$ with its freely generated $\Sigma$-CwF.
In this language, a model\label{def:gat-model} $\cl M$ of a GAT $G$ is simply a
$\Sigma$-CwF morphism $\cl M : G \to \Set$, with the standard $\Sigma$-CwF
structure on $\Set$. To give a morphism out of $G$, it suffices to define its
actions on the signature items of $G$.

\begin{example}
A morphism $\cl F : \s{Cat}_1 \to E$ is determined by a closed type $\cl
F.\s{Con}$ in $E$ and a type $\cl F.\s{Sub}\ \Gamma\ \Delta$ over $\Gamma : \cl
F.\s{Con}$ and $\Delta : \cl F.\s{Con}$ in $E$, such that the terminal object,
identity and composition are preserved.
\end{example}

\subparagraph{Second-order generalised algebraic theories}

Second-order generalised algebraic theories (SOGATs) \cite{Uemura2021-jq,Bocquet2025-ox} are a generalisation of
GATs which allow second-order binding to appear in signatures. The \emph{SOGAT
theory of signatures} extends the GAT theory of signatures with a subuniverse
$\univ\lR \subset \univ$ and a dependent function type $\Pi\lR$ in $\univ$ with
domain in $\univ\lR$.

\begin{example}
The simplest interesting SOGAT is untyped lambda calculus with $\beta\eta$ laws:
\begin{equation}\label{eq:lc-sogat}\begin{array}{l}
  \Tm : \univR \\
  (\s{lam}, \s{app}, \beta, \eta) : (\Tm \to \Tm) \simeq \Tm \end{array}
\end{equation}
The isomorphism $(\Tm \to \Tm) \simeq \Tm$ is shorthand for two operations
$\s{lam}$, $\s{app}$ and two coherence equations $\beta$, $\eta$. The forward
direction $\s{lam}$ implicitly uses $\Pi\lR$ to bind the second-order occurrence
of $\Tm$, while the backward direction $\s{app}$ uses the first-order $\Pi$.
\end{example}

SOGATs are an even more convenient tool to describe type theories because they
already include a notion of variable binding. This avoids the boilerplate of
encoding variables and substitution manually, at least for \emph{structural}
theories (ones that do not involve linearity, contextual modalities or other
constraints on variable usage).

\begin{definition}[$(\Sigma, \Pi\lR)$-CwF]
A \emph{$(\Sigma, \Pi\lR)$-CwF}\label{def:sigma-pir-cwf} is a $\Sigma$-CwF
  with a subpresheaf $\Ty\lR \subseteq \Ty$ of \emph{representable types}
  closed under $\Unit$ and $\Sigma$, and a function type $\Pi\lR$ with domain
  in $\Ty\lR$, inducing a category $\re{CwF}_{\Sigma, \Pi\lR} \subseteq
  \re{CwF}_\Sigma$.
\end{definition} 

Similar to before, a SOGAT can be viewed as a freely generated $(\Sigma, \Pi_{\s
R})$-CwF. This time, sorts in $\univ\lR$ become the base types in $\Ty\lR$. A
model\label{def:sogat-model} $\cl M$ of a SOGAT $S$ consists of a category $\cl
M_\diamond$ with terminal object and a $(\Sigma, \Pi\lR)$-CwF morphism $\cl M :
S \to \Psh{\cl M_\diamond}$. The codomain $\Psh{\cl M_\diamond}$ has a standard
$(\Sigma, \Pi_{\s R})$-CwF structure where the representable types are presheaves
with a `context extension' operation on $\cl M_\diamond$. This includes the types of
$\cl M_{\diamond}$ if $\cl M_{\diamond}$ itself is a CwF, by the Yoneda embedding $\yoneda$. The category $\cl M_\diamond$
should be thought of as the underlying \emph{category of contexts} of $\cl M$, where
second-order binding in $S$ is interpreted using context extension along
representable types in $\cl M_\diamond$. Once again, to give a morphism out of $S$
it suffices to define its actions on the signature items of $S$.

\subparagraph{Correspondence between GAT and SOGAT models}

Bocquet's approach to the model theory of SOGATs is to reduce them to GATs and
then reuse the model theory of GATs. From a SOGAT $S$ we can compute a GAT
$S^{\s{fo}}$ extending $\s{Cat}_1$, such that SOGAT models $\cl M : S \to \Psh
{\cl M_{\diamond}}$ are in bijective correspondence with GAT models
$\widetilde{\cl M} : S^\s{fo} \to \Set$ where the $\s{Cat}_1$ part of $S^\s{fo}$
maps to $\cl M_{\diamond}$. We often reuse the name $\cl M$ for $\widetilde{\cl
M}$ as it is unambiguous to do so. The mapping $S \mapsto S^{\s{fo}}$
essentially corresponds to the signature-based translation from SOGATs to GATs
detailed by Kaposi and Xie~\cite{Kaposi2024-db}: it maps a SOGAT $S$ to the GAT
of a category with terminal object and the presheaf interpretation of $S$ over
it. This generates all the variable and substitution boilerplate machinery
necessary to interpret the second-order binding in $S$ in terms of first-order
context extension. This mapping is \emph{functorial}: a SOGAT morphism $S \to T$
yields a GAT morphism $S^{\s{fo}} \to T^{\s{fo}}$ that preserves the category
structure.

\begin{example}\label{ex:lc-fo} Translating the SOGAT of untyped lambda calculus
from \cref{eq:lc-sogat} produces a GAT extending $\s{Cat}_1$ with a family $\Tm
: \Con \to \univ$ of terms with substitution $-[-] : \Tm\ \Delta \to \Sub\
\Gamma\ \Delta \to \Tm\ \Gamma$ preserving identity and composition (in other
words, a presheaf), a context extension operator $-\rhd : \Con \to \Con$
representing the presheaf $\Tm$ via $\s{Sub}\ \Gamma\ (\Delta \rhd) \simeq
\s{Sub}\ \Gamma\ \Delta \times \Tm\ \Gamma$, and an isomorphism $(\s{lam},
\s{app}) : \Tm\ (\Gamma \rhd) \simeq \Tm\ \Gamma$ encoding the second-order
constructor $\s{lam}$ using context extension.
\end{example}

\subparagraph{Categories of models}

For any two models $\cl M, \cl N$ of a GAT $G$, there is a function space model
$\re{Func}(\cl M, \cl N)$ of natural transformations from $\cl M$ to $\cl N$,
displayed over $G$. For example, a context displayed over $\Gamma$ in
$\s{Func}(\cl M, \cl N)$ is a function $\cl M\, \Gamma \to \cl N\, \Gamma$. A
strict morphism of models $\cl M \to \cl N$ is thus defined as a dependent
$\Sigma$-CwF morphism $G \to \re{Func}(\cl M, \cl N)$. This yields the usual
notion of GAT morphism that can be computed from signatures
\cite{Kovacs2023-gq}, and induces a category of $G$-models $\re{Mod}(G)$. The
syntax $\re 0_G$ of a GAT $G$ is the initial $G$-model, which always exists and
is given by a quotient inductive-inductive type; for any other $G$-model $\cl
M$, we write $\llbracket-\rrbracket_{\cl M} : \re 0_G \to \cl M$ for the unique
$G$-model morphism. The syntax $\re 0_S$ of a SOGAT $S$ is simply $\re 0_{S^{\s
fo}}$. A strict morphism\label{rem:morphisms} of GATs $\cl F : G \to H$ yields a
functor between categories of models $\cl F^* : \re{Mod}(H) \to \re{Mod}(G)$ by
precomposition. Applying this to $\re 0_H$, we get a
$G$-model $\cl F^*\ \re 0_H$. We often abuse notation and write $\cl F$ instead of $\llbracket - \rrbracket_{\cl F^* \re 0_H}$.

\section{\ZTT as a SOGAT}\label{sec:itt-sogat}

Now we proceed to define \ZTT as a SOGAT and compute its GAT translation. The
full definition is given in \cref{fig:tt0-sogat}, with the extension to
inductive types described in \cref{ssub:inductive}.

\begin{figure}[h]
\centering
\[\small\begin{array}{l}
  \s{Mode} \triangleq \{0, \omega\} \\
  \Ty : \std{N} \to \univ \\
  \Tm : \s{Mode} \to \Ty_\ell \to \univR \\
  \\
  \# : \univR \\
  \#\s{-prop} : (p, q : \#) \to p = q \\
  (\downtm, \uptm) : (\# \to \Tm_\omega\ A) \simeq \Tm_0\ A
\end{array}
\quad
\begin{array}{l}
  \Pi_i : (A : \Ty_\ell) \to (\Tm_i\ A \to \Ty_\ell) \to \Ty_\ell \\
  (\s{lam}, \s{app}) : ((x : \Tm_i\ A) \to \Tm_\omega\ (B\ x)) \simeq \Tm_\omega\ (\Pi_i\ A\ B) \\
  \\
  \Sigma_i : (A : \Ty_\ell) \to (\Tm_i\ A \to \Ty_\ell) \to \Ty_\ell \\
  (\s{pair}, \s{proj}) : ((x : \Tm_i\ A) \times \Tm_\omega\ (B\ x)) \simeq \Tm_\omega\ (\Sigma_i\ A\ B) \\
  \\
  \s{U} : (\ell : \std{N}) \to \Ty_{\ell + 1} \\
  (\s{El}, \s{code}) : \Tm_0\ \s{U}_\ell \simeq \Ty_\ell
\end{array}\]
\caption{The SOGAT defining \ZTT with $\Pi$, $\Sigma$, and universes (\agdaicon{https://cthe.me/redirects/erasure/ttwE}).}
\label{fig:tt0-sogat}
\end{figure}

\ZTT is a theory involving three sorts: types $\Ty$ externally indexed by
universe levels in $\std N$, terms $\Tm$ externally indexed by modes $\{0, \omega\}$ and
internally indexed by types, and the \emph{erasure marker} $\#$, forced to be a
proposition by $\#\s{-prop}$. The definitional isomorphism $(\downtm, \uptm)$
converts between modes: if $a \col{0} A$ and $p : \#$, then $\uptm[p] a
\col{\omega} A$; if $a \col{\omega} A$ under assumption $p: \#$, then
$\downtm[p.] a \col{0} A$. In $\uptm[p] a$, $p$ is an argument, while in
$\downtm[p.] a$, $p$ is \emph{bound} in $a$, shorthand for $\downtm (\lambda p.\
a)$. Since we can always weaken $a : \Tm_\omega\ A$ to $\lambda\_.a : \# \to
\Tm_\omega\ A$, we can convert $\omega$ terms to 0 terms in any context. By
induction on $i$, we implicitly extend this to $\downtm : \Tm_i\ A \to \Tm_0\
A$.

We include $\Sigma$ and $\Pi$ types whose introduction and elimination rules
involve terms in mode $\omega$, while type formation binds a term of mode $i$
matching the domain mode. Universes have codes \emph{in mode 0}. This induces an
isomorphism $(\# \to \s{Ty}_\ell) \simeq \s{Ty}_\ell$ which allows us to convert
between type families $\Tm_\omega\ A \to \Ty_\ell$ and $\Tm_0\ A \to \Ty_\ell$.
Without universes we would have to include this isomorphism as a primitive. This
yields an alternative equivalent formulation of $\Pi_i$ and $\Sigma_i$ 
where the binder in the type formation rule is always indexed by $\Tm_0\ A$, for
example: $\Sigma_i : (A : \Ty_\ell) \to (\Tm_0\ A \to \Ty_\ell) \to \Ty_\ell$.
Also, it is possible and practically desirable to include strict cumulativity
\cite{Sterling2019-ze} for universes, which we omit here.

\subparagraph{The erased fragment}\label{ssub:erased}

The theory only explicitly includes terms for the runtime fragment
$\omega$. Usually, presentations of theories with erasure will include
a copy of terms in each mode. In our case, this is not necessary. By virtue
of the isomorphism $(\downtm, \uptm)$, we automatically get the entire erased
fragment `for free'.
For example, erased terms for function types at any domain mode are derivable:
\[
\begin{array}{l}
  \s{lam}_0 : ((x : \Tm_0\ A) \to \Tm_0\ (B\ x)) \to \Tm_0\ (\Pi_i\ A\ (B \circ \downtm)) \\
  \s{lam}_0\ f = {\downtm[p.] \s{lam}\ (\lambda x.\ \uptm[p] (f\ \downtm x))} \\[1em]
  \s{app}_0 : \Tm_0\ (\Pi_i\ A\ B) \to (x : \Tm_0\ A) \to \Tm_0\ ((B \circ \uptm)\ x)  \\
  \s{app}_0\ t\ x = {\downtm[p.] \s{app}\ (\uptm[p] t)\ (\uptm[p] x)}
\end{array}\]
Here we use $- \circ \uptm$ and $- \circ \downtm$ for the two directions
of $(\Tm_\omega\ A \to \Ty_\ell) \simeq (\Tm_0\ A \to \Ty_\ell)$.
The isomorphism providing $\beta$ and $\eta$ for $(\s{lam},
\s{app})$ also holds for $(\s{lam}_0, \s{app}_0)$. The idea is that if the goal
is of the form $\Tm_0\ X$, then we introduce $\downtm[p.]$ and reduce the goal
to $\Tm_\omega\ X$ while having access to $p$. Then we use whatever runtime term
former to fulfil the goal. When such a runtime term former \emph{expects} a
$\Tm_\omega\ Y$  but all we have is a $\Tm_0\ Y$, we wrap it in $\uptm[p]$ using
the $p$ we bound earlier. We can derive the rest of the erased fragment this
way. From now on, we use the subscript $_0$ as in $\s{lam}_0$ to denote the
erased variant of some term former.

\subparagraph{Inductive types}\label{ssub:inductive}

It is straightforward to include any indexed inductive type in \ZTT. We only
need to include it in mode $\omega$, and then we can derive its erased fragment
using the technique in \cref{ssub:erased}. However, the main affordance of
erasure is that we can choose for some constructor data, usually indices, to
\emph{always} be erased. For example, here is the inductive family of
length-indexed vectors, where the length is considered erased data:
\[
\begin{array}{l}
  \s{Vect} : \s{Tm}_0\ {\s{Nat}} \to \Ty_\ell \to \Ty_\ell \\
  \s{nil} : \Tm_\omega\ (\s{Vect}\ \s{zero}\ A) \\
  \s{cons} : \{k : \Tm_0\ \s{Nat}\} \to \Tm_\omega\ A \to
    \Tm_\omega\ (\s{Vect}\ k\ A) \to \Tm_\omega\ (\s{Vect}\ (\s{succ}_0\ k)\ A) \\
  \s{elim} : (P : (n : \Tm_0\ \s{Nat}) \to \Tm_0\ (\s{Vect}\ n\ A) \to \Ty_\ell) \\
  \quad \to \Tm_\omega\ (P\ \s{zero}_0\ \s{nil}_0) \\
  \quad \to (\{k : \Tm_0\ \s{Nat}\} \to (x : \Tm_\omega\ A)
  \to (\textit{xs} : \Tm_\omega\ (\s{Vect}\ k\ A))  \\
  \quad \quad \to \Tm_\omega\ (P\ k\ \textit{xs})
  \to \Tm_\omega\ (P\ (\s{succ}_0\ k)\ (\s{cons}_0\ x\ \textit{xs}))) \\
  \quad
  \to
  \{n : \Tm_0\ \s{Nat}\} \to
  (v : \Tm_\omega\ (\s{Vect}\ n\ A))
  \to \Tm_\omega\ (P\ n\ v) \\
\end{array}\]
We have omitted type indexing and computation rules for brevity. Erased constructor
arguments that appear in return indices can sometimes be converted to
\emph{relevant} arguments. A sufficient condition for this is Brady et~al.'s
forcing analysis \cite{Brady2004-ay}, which can be applied here: if we make the
$n$ in the output above \emph{relevant}, we can also make $k$ in the $\s{cons}$
method relevant because $k$ can be computed at runtime by stripping $\s{succ}$
from $n$. It is possible to extend code extraction (\cref{def:code-extr}) with such
`forced' eliminators, and other optimisations in \cite{Brady2004-ay} like
detagging.

We could include general W-types \cite{Hugunin2021-ui} in \ZTT, but this would
not afford us the most generality. This is because we have a choice of erased
data both in the non-recursive and the recursive arguments. The former would be
possible with standard W-types because we have mode-dependent $\Sigma$ types,
but the latter wouldn't. For example, consider the type
\[
\begin{array}{l}
  \s{Nat0} : \Ty_\ell \\
  \s{zero0} : \Tm_\omega\ \s{Nat0} \\
  \s{succ0} : \Tm_0\ \s{Nat0} \to \Tm_\omega\ \s{Nat0} \\
\end{array}\]
This cannot be encoded as a regular W-type because W-types encode \emph{arities}
of recursive arguments, but now there is also the axis of $0/\omega$ to choose
from. Additionally, we might want entire constructors to only exist in mode 0:
\[
\begin{array}{l}
  \s{Bool}_{\omega\to \s{true}} : \Ty_\ell \\
  \s{true} : \Tm_\omega\ \s{Bool}_{\omega\to \s{true}} \\
  \s{false} : \Tm_0\ \s{Bool}_{\omega\to \s{true}} \\
\end{array}\]
This is the type of booleans that are always true at runtime. So there is also
an axis of $0/\omega$ for the return sort of each constructor. Altogether these
`exotic' inductive types suggest that it would be useful to investigate
\emph{signatures} for inductive types with erasure to better understand this
space of possibilities. We leave this to future work.

\subparagraph{Compatibility with two-level type theory}\label{ssub:2ltt}

Having formulated erasure as a SOGAT, we can now easily combine it with other
features formulated as SOGATs. For example, we can form a theory $\s{2LTT}_0$ of
two-level type theory \cite{Kovacs2022-rf} with erasure. Aside from its use in
HoTT \cite{Annenkov2023-vk}, two-level type theory can be viewed as a type
theoretic formulation of staging/metaprogramming. It allows us to control which
binders are evaluated at compile-time versus runtime, existing in the same space
of features as erasure -- features that aim to improve the feasibility of
dependent types for practical programming.

The general principle of two-level type theory is that some or all judgments in
the object theory become type formers in the meta-theory. In this case, our
object theory is type theory with erasure. Therefore, we promote $\Tm_\omega$
and $\Tm_0$ from sorts to meta-level types (but not $\#$, to retain the
decidability of its presence in any context):
\[\begin{array}{l@{\qquad}l}
\s{Ty}^{\s M} : \univ & {\Uparrow} : (i : \s{Mode}) \to \s{Ty} \to \s{Ty}^{\s M} \\
\s{Tm}^{\s M} : \s{Ty}^{\s M} \to \univR & (\langle-\rangle, {\sim}-) :
\s{Tm}^{\s M}\ {\Uparrow}_i A \simeq \s{Tm}_i\ A \end{array}\]
Here the
superscript $\s M$ denotes the meta fragment, while the base theory corresponds
to the object fragment. In the meta fragment, we have two separate lifting
operations $\Uparrow_i$, one for each mode $i$. The erasure marker $\#$ and
coercions $\downtm/\uptm$ remain unchanged from before. From them we can derive
coercions in the meta level as well:
\[\begin{array}{l}
  ({\downtm}^{\s M}, {\uptm}^{\s M}) : (\# \to \s{Tm}^{\s M}\
  {\Uparrow}_\omega A) \simeq \s{Tm}^{\s M}\ {\Uparrow}_0 A
\end{array}\]
These are given by $\downtm^{\s M}\ f = \langle {\downtm}_{p.}\ {\sim}(f\ p)
\rangle$ and $\uptm^{\s M}_p\ x = \langle \uptm_p\ {\sim} x \rangle$.

\subsection{Generated first-order theory (\agdaicon{https://cthe.me/redirects/erasure/cwfwE})}\label{sub:first-order}

Using the results from \cref{sec:formal}, we now compute the GAT specification
of \ZTT. This is the actual `type system' in the traditional sense, which
includes contexts and variables. As with any SOGAT, we start with a category
with terminal object described in \cref{ex:cat1}.

\subparagraph{Sorts}

For each sort in the SOGAT, we get a presheaf on this category:
\[
\Ty : \std{N} \to \Con \to \univ \qquad \Tm : \s{Mode} \to (\Gamma : \Con) \to
\Ty_\ell\ \Gamma \to \univ \qquad {\#\in-} : \Con \to \univ
\]
There is also an operation $(x, y : \# \in \Gamma) \to x = y$ from $\#\s{-prop}$
that forces $\# \in -$ to be proposition-valued. We now see that $\#\in\Gamma$
(\cref{sub:phase}) is simply the presheaf generated by the sort $\#$ evaluated
at a context $\Gamma$. Each of these presheaves comes with a substitution
operation $A[\sigma]$, $t[\sigma]$, $\pi[\sigma]$ which respects $\circ$ and
$\s{id}$.

\subparagraph{Context extensions}
The representability of mode-indexed terms and the erasure marker yields
the context extension operations
\[-\rhd_{i}- : (\Gamma : \Con) \to \Ty_\ell\ \Gamma \to \Con \qquad
-\rhd \# : \Con \to \Con\]
with corresponding isomorphisms
\[\begin{array}{ll}
((-,_i-),\, \s{pq}_i) &: (\sigma : \Sub\ \Gamma\ \Delta) \times \Tm_i\ \Gamma\ A[\sigma] \simeq \Sub\ \Gamma\ (\Delta \rhd_i A) \\
((-,_\#-),\, \s{pq}_\#) &: \Sub\ \Gamma\ \Delta \times (\# \in \Gamma) \simeq
\Sub\ \Gamma\ (\Delta \rhd \#) \end{array}\] defining that substitutions can be
extended by terms and erasure marker witnesses, and that terms in both modes as well
as the erasure marker have the `0th' de~Bruijn index by $\s{q}$ and weakening
by $\s{p}$. More explicitly, we get
\[\begin{array}{ll@{\qquad}ll}
\s{p}_i &: \Sub\ (\Gamma \rhd_i A)\ \Gamma &
\s{p}_\# &: \Sub\ (\Gamma \rhd \#)\ \Gamma \\
\s{q}_i &: \Tm_i\ (\Gamma \rhd_i A)\ A [\s{p}_i] &
\s{q}_\# &: \# \in (\Gamma \rhd \#)
\end{array}
\]
which are derivable from the isomorphisms above, satisfying the usual CwF rules.

\subparagraph{Erasure coercions}

The defining isomorphism of $\#$ is presented in the GAT as
\[(\downtm, \uptm) : \Tm_\omega\ (\Gamma \rhd \#)\ A \simeq \Tm_0\ \Gamma\ A\]
relating runtime terms in a context extended by $\#$ to erased terms. The
$\uptm$ direction outputs a term in an extended context. We can also derive
a form which stores the data of the substitution needed to make the context general:
\[{\uptm'} : \# \in \Gamma \to \Tm_0\ \Gamma\ A \to \Tm_\omega\ \Gamma\ A \qquad
{\uptm[p]'} t = (\uptm t)[\s{id} \, ,_\# \, \s{p}_\#]\]
Just like before, we can extend $\downtm$ to operate on a term in any mode $i$ -- now explicitly:
\[{\downtm[*]} : \Tm_i\ \Gamma\ A \to \Tm_0\ \Gamma\ A \qquad
{\downtm[*]}\ \{i=0\}\ t = t \qquad
{\downtm[*]}\ \{i=\omega\}\ t = {\downtm} (t[\s{p}_\#])\]

\subparagraph{Standard types}

The rest of the theory is translated to a first-order representation that looks
very similar to the usual CwF structures. For $\Pi$ types we get
\[\begin{array}{ll}
  \Pi_i : (A : \Ty_\ell\ \Gamma) \to (B : \Ty_\ell\ (\Gamma \rhd_i A)) \to \Ty_\ell\ \Gamma \\
  (\s{lam},\s{app}) : \Tm_\omega\ (\Gamma \rhd_i A)\ B
\simeq \Tm_\omega\ \Gamma\ (\Pi_i\ A\ B) \end{array}\]
matching the usual CwF formulation: the codomain $B$ lives over the
mode-$i$ extension $\Gamma \rhd_i A$. We can write the less `categorical'
application operator as
\[\s{app}' : \Tm_\omega\ \Gamma\ (\Pi_i\ A\ B) \to (x : \Tm_i\ \Gamma\ A) \to
\Tm_\omega\ \Gamma\ B[\langle x \rangle]\] where $\langle-\rangle
\triangleq (\s{id},_i -) : \Tm_i\ \Gamma\ A \to \Sub\ \Gamma\ (\Gamma \rhd_i A)$
is the single term substitution. The derivable erased fragment
(\cref{ssub:erased}) is also carried over to the first-order GAT presentation.

\section{Syntactic properties}\label{sec:models-of-itt}

In this section, we explore some properties of the first-order syntax of $\ZTT$.

\subparagraph{Zeroing}
The first property we show is that erased terms do not depend on runtime
variables. In particular, an erased term in $\Gamma$ should uniquely correspond
to an erased term in $0\Gamma$, which is $\Gamma$ with the modes of all the
variables set to 0, and without any erasure markers $\#$. This verifies that the
phase distinction truly prevents erased data from depending on runtime data. We
do this by showing that the erased fragment of $\ZTT$ is an appropriate model
for the entire theory (the \emph{zeroing} model), such that zeroing erased $\ZTT$
terms is a bijection.

\begin{definition}[Zeroing \agdaicon{https://cthe.me/redirects/erasure/zeroing}]\label{def:zeroing-model}
  The \emph{zeroing}
  $(\Sigma, \Pi_{\s R})$-CwF endomorphism $0 : \ZTT \to \ZTT$ uses the erased fragment of
  \ZTT to implement the relevant fragment. It sets the mode of all terms to
  erased, and removes any erasure markers $\#$. It is defined by its action on the signature components:
  \[\begin{array}{ll@{\qquad\qquad}ll}
  0.\Ty_\ell      &\triangleq \Ty_\ell  &
  0.\Tm_0\ A      &\triangleq \Tm_0\ A  \\
  0.\#            &\triangleq \std{1}  &
  0.\Tm_\omega\ A &\triangleq \Tm_0\ A
  \end{array}\]
  The rest of the signature is implemented by the erased fragment (\cref{ssub:erased}).
  For example $0.\Pi_i\ A\ B \triangleq \Pi_i\ A\ B$, $0.\s{lam}\ f
  \triangleq \s{lam}_0\ f$ and $0.\s{app}\ f\ x \triangleq \s{app}_0\ f\ x$. The
  $\uptm$/$\downtm$ become trivial.
\end{definition}

By acting on the syntax, this yields a morphism of $\ZTT^{\s{fo}}$ models $\llbracket-\rrbracket_0 : \re 0_{\ZTT}
\to 0^*\ \re 0_{\ZTT}$. We can compute the actions on syntactic sorts as
$0_{\Con} : \Con \to \Con$,
$0_{\Ty} : \Ty\ \Gamma \to \Ty\ 0\Gamma$,
$0_{\Tm_\omega} : \Tm_\omega\ \Gamma\ A \to \Tm_0\ 0\Gamma\ 0A$,
$0_{\Tm_0} : \Tm_0\ \Gamma\ A \to \Tm_0\ 0\Gamma\ 0A$,
and
$0_{\#} : \# \in \Gamma \to \std{1}$.

\begin{theorem}[Types and erased terms need nothing \agdaicon{https://cthe.me/redirects/erasure/need-nothing}]\label{thm:need-nothing}
  In the syntax of \ZTT there
  exists a substitution $\upsub : \Sub\ \Gamma\ 0\Gamma$ which becomes the identity
  under zeroing, such that $(0A)[\upsub] = A$ for types and $(0a)[\upsub] = \downtm[\star] a$ for terms.
  This induces natural isomorphisms:
  \[\begin{array}{l}
    \Ty_\ell\ 0\Gamma \simeq \Ty_\ell\ \Gamma \qquad\qquad
    \Tm_0\ 0\Gamma\ 0 A \simeq \Tm_0\ \Gamma\ A
  \end{array}\]
\end{theorem}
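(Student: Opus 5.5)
We construct $\upsub : \Sub\ \Gamma\ 0\Gamma$ by induction on the syntactic context $\Gamma$, and prove the two equations $(0A)[\upsub] = A$ and $(0a)[\upsub] = \downtm[\star] a$ simultaneously by induction on the syntax. The cleanest packaging is to define a displayed model over $\re 0_{\ZTT}$ (a dependent $\Sigma$-CwF morphism in the sense of \cref{sec:formal}). Over a context $\Gamma$ it stores a substitution $\upsub_\Gamma : \Sub\ \Gamma\ 0\Gamma$ with $0(\upsub_\Gamma) = \s{id}$. Over a type $A$ it stores the equation $(0A)[\upsub_\Gamma] = A$, over a term $a$ of mode $i$ the equation $(0a)[\upsub_\Gamma] = \downtm[\star] a$, and over $\pi : \# \in \Gamma$ nothing. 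The naturality obligations in the substitution sort require $0\sigma \circ \upsub_\Gamma = \upsub_\Delta \circ \sigma$. Since all non-context components are propositions by UIP, induction (the eliminator of the initial model) then gives the statement.

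The context cases go as follows. For $\bullet$ we take $\epsilon$. For $\Gamma \rhd \#$ we have $0(\Gamma\rhd\#) = 0\Gamma$ (since $0.\# = \std 1$ and extension by the unit is trivial), so we take $\upsub_\Gamma \circ \s{p}_\#$. For $\Gamma \rhd_i A$ we have $0(\Gamma \rhd_i A) = 0\Gamma \rhd_0 0A$, and we take $(\upsub_\Gamma \circ \s{p}_i ,_0 \downtm[\star]\s{q}_i)$. This is well typed precisely because of the induction hypothesis $(0A)[\upsub_\Gamma] = A$. The condition that $\upsub$ becomes the identity under zeroing follows because $0$ sends $\downtm[\star]\s{q}_i$ to $\s{q}_0$ and $\s{p}_\#$ to an identity. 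The variable case $\s{q}_i$ then holds by construction, and substitution cases follow from naturality of $\upsub$, which is itself proved by induction on $\sigma$.

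For the type and term formers, each case reduces to checking that the erased-fragment implementation used by $0$ (\cref{ssub:erased}), after substituting along $\upsub$, equals $\downtm[\star]$ of the original former. For instance, $0(\s{lam}\ t) = \s{lam}_0(0t)$, so $(\s{lam}_0 (0t))[\upsub] = \s{lam}_0((0t)[\upsub^{\uparrow}])$, and by the induction hypothesis this equals $\s{lam}_0(\downtm[\star] t) = \downtm[p.]\s{lam}(\uptm[p]\downtm t)$. By $\uptm\downtm = \s{id}$ this is $\downtm(\s{lam}\ t)$. The coercion cases are the crux. For $\uptm[p] a$ the zeroing is just $0a$, and we need $\downtm[\star](\uptm[p]a) = \downtm[\star] a$ in a context containing $\#$. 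This holds by the $\#$-prop and $\beta\eta$ of $(\downtm,\uptm)$: in $\Gamma$ with $\#\in\Gamma$, weakening then re-binding $\#$ agrees with the existing witness. For $\downtm a$ with $a$ in $\Gamma \rhd \#$, we use that $\upsub_{\Gamma\rhd\#} = \upsub_\Gamma\circ\s{p}_\#$ and that $\downtm[\star]$ of an erased term is itself.

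I expect the main obstacle to be exactly this handling of $\#$ together with the lifted substitution $\upsub^{\uparrow}$ under binders, that is, the naturality $0\sigma\circ\upsub = \upsub\circ\sigma$ for substitutions containing $,_\#$ and for mode-$\omega$ extensions. I would first prove the lemma $\downtm[\star](t[\sigma]) = (\downtm[\star] t)[\sigma]$ and a lemma that $\downtm[\star]$ commutes with $\uptm[p]$ appropriately, and then discharge the remaining cases routinely.

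Finally, we deduce the isomorphisms. The map $\Ty\ \Gamma \to \Ty\ 0\Gamma$ is $0$, and the map back is $-[\upsub]$; one composite is the identity by the theorem. For the other composite, given $B : \Ty\ 0\Gamma$ we have $0(B[\upsub]) = (0B)[0\upsub] = 0B$. Here we use that $0$ is idempotent on zeroed syntax ($0 0\Gamma = 0\Gamma$ and $0B = B$ for $B$ over $0\Gamma$), which is a separate small induction, or alternatively we apply the theorem over $0\Gamma$, where $\upsub_{0\Gamma} = \s{id}$. The argument for erased terms is the same, using $\downtm[\star] a = a$ for $a$ of mode $0$. Naturality in $\Gamma$ is the naturality of $\upsub$ established above.
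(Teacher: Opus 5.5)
Your proposal is correct and takes essentially the same approach as the paper's formalised proof. The paper gives no written proof, but the statement can only be proved by induction over the syntax, and you do this with a displayed model: it carries $\upsub_\Gamma$ over contexts, and takes the type and term equations together with naturality $0\sigma \circ \upsub_\Gamma = \upsub_\Delta \circ \sigma$ as propositional motives on the other sorts.

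Two steps need slightly more than you wrote:
\begin{enumerate}
\item In the $\downtm$ case your hypotheses only give $((0a)[\upsub_\Gamma])[\s{p}_\#] = (\downtm a)[\s{p}_\#]$. You must also use that weakening along $\s{p}_\#$ is injective on erased terms; a left inverse is $b \mapsto \downtm((\uptm b)[\s{id} ,_\# \s{q}_\#])$.
\item In the final isomorphism, $0B = B$ for $B$ over $0\Gamma$ is not a separate small induction. It follows from your second route, namely the theorem at $0\Gamma$ with $\upsub_{0\Gamma} = \s{id}$. That route, like stating the motive $0\upsub_\Gamma = \s{id}$ at all, presupposes that $00\Gamma = 0\Gamma$ has been proved first.
\end{enumerate}
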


\subparagraph{Conservativity over type theory}
We can also characterise the relationship between \ZTT and \TT. In particular,
we would like to ensure that erasure does not allow the production of any exotic
terms that are not possible in ordinary type theory. The formal notion of this
condition is \emph{conservativity}: if there exists a proof of a $\TT$-theorem
in $\ZTT$, then a proof also exists in $\TT$. This can be shown by constructing
bidirectional interpretations.

\begin{definition}[\ZTT to \TT{} \agdaicon{https://cthe.me/redirects/erasure/tt-model}]\label{def:tt-model}
  The morphism $\lemb - :
  \ZTT \to \TT$ is constructed by using \TT types and terms to implement \ZTT
  types and terms (both runtime and erased). In this model, mode annotations and
  the erasure marker are forgotten:
  \[\begin{array}{ll@{\qquad\qquad}ll}
  \lemb-.\Ty_\ell      &\triangleq \Ty_\ell  &
  \lemb-.\Tm_0\ A      &\triangleq \Tm\ A  \\
  \lemb-.\#            &\triangleq \std{1}  &
  \lemb-.\Tm_\omega\ A &\triangleq \Tm\ A
  \end{array}\]
\end{definition}

\begin{definition}[\TT to \ZTT{} \agdaicon{https://cthe.me/redirects/erasure/0tt-model}]\label{def:0tt-model}
  Conversely, the morphism $\emb - : \TT \to \ZTT$ is constructed
  by using the erased fragment to implement the structure of \TT:
  \[
  \emb-.\Ty_\ell \triangleq  \Ty_\ell \qquad\qquad
  \emb-.\Tm\ A \triangleq  \Tm_0\ A
  \]
\end{definition}

The composition $\emb{\lemb-}$ is almost equivalent to zeroing $0$ (but doesn't
preserve $\Pi$ and $\Sigma$ modes), while $\lemb{\emb-}$ is the identity. In the
terminology of Bocquet \cite[p.~40]{Bocquet2025-ox}, $\lemb -$ is a \emph{trivial
fibration}, yielding the following properties:

\begin{theorem}[Erased conservativity of \ZTT over \TT{}  \agdaicon{https://cthe.me/redirects/erasure/erased-cons}]\label{thm:erased-cons}
  \ZTT's erased fragment is conservative over \TT: there is a surjective natural map
  $\re 0_{\ZTT}.\Tm_0\ \emb \Gamma\ \emb A \to \re 0_\TT.\Tm\ \Gamma\ A$.
\end{theorem}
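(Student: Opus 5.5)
The surjective map will be the action of $\lemb-$ on erased terms, and the proof will rest on $\lemb{\emb-}$ being the identity. Concretely, write
\[\lemb-_{\Tm_0} : \re 0_{\ZTT}.\Tm_0\ \Delta\ B \to \re 0_{\TT}.\Tm\ \lemb\Delta\ \lemb B\]
for the action of $\lemb-$ (\cref{def:tt-model}) on erased terms. Specialising to $\Delta = \emb\Gamma$ and $B = \emb A$ gives a map into $\re 0_\TT.\Tm\ \lemb{\emb\Gamma}\ \lemb{\emb A}$. This map is natural in $\Gamma$ because $\lemb-$ is a morphism of $\ZTT^{\s{fo}}$-models and so commutes with substitution. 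It remains to identify the codomain with $\re 0_\TT.\Tm\ \Gamma\ A$ and to exhibit a section.

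\textbf{Step 1: $\lemb{\emb-}$ is the identity.} The composite $\lemb{\emb-} : \TT \to \TT$ is a $(\Sigma,\Pi\lR)$-CwF endomorphism. I check on each signature item of \TT that it acts as the identity.
\begin{itemize}
\item $\emb-$ sends $\Tm\ A$ to $\Tm_0\ A$, and $\lemb-$ sends $\Tm_0\ A$ back to $\Tm\ A$.
\item Each \TT term former, for example $\s{lam}$, goes under $\emb-$ to its erased variant $\s{lam}_0\ f = \downtm[p.]\s{lam}\,(\lambda x.\uptm[p](f\,\downtm x))$.
\item Under $\lemb-$ the sort $\#$ becomes $\std 1$, so $\uptm$ and $\downtm$ become identities, and $\s{lam}_0$ collapses back to $\s{lam}$. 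The same holds for $\s{app}$, pairing, projections, $\s{El}$ and $\s{code}$.
\end{itemize}
Since a morphism out of a SOGAT is determined by its action on signature items, the composite equals the identity morphism. Passing to first-order syntax by initiality, $\llbracket-\rrbracket$ for this composite is the identity on $\re 0_\TT$. Hence $\lemb{\emb\Gamma} = \Gamma$ and $\lemb{\emb A} = A$, up to the equality transports the paper suppresses, so the codomain of the map is exactly $\re 0_\TT.\Tm\ \Gamma\ A$.

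\textbf{Step 2: surjectivity.} Given $t : \re 0_\TT.\Tm\ \Gamma\ A$, its image $\emb t : \re 0_{\ZTT}.\Tm_0\ \emb\Gamma\ \emb A$ is a valid preimage. By Step 1, $\lemb{\emb t} = t$ in $\re 0_\TT$. So $\emb-$ is a section of the map and the map is surjective; this is the trivial-fibration property in Bocquet's sense.

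\textbf{Main obstacle.} I expect the hard part to be the equation-checking in Step 1, especially for the binders. One must confirm that $\lemb{\s{lam}_0\ f}$ reduces to $\s{lam}\ \lemb f$ once the $\downtm/\uptm$ coercions and the $\#$-extension of the context are interpreted trivially. This requires $\lemb{\Gamma \rhd \#} = \lemb\Gamma \rhd \std 1 \cong \lemb\Gamma$ to be a strict equality rather than only an isomorphism. If it is merely an isomorphism, the identity has to be established up to that canonical isomorphism, which is enough for surjectivity. I would first try to make it strict by choosing the $\Sigma$-CwF interpretation of $\std 1$-extension in $\TT$ to be the identity. Failing that, I would carry the canonical isomorphism through the argument explicitly.
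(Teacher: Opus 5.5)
Your proposal is correct and follows the paper's route. The map is the action of $\lemb-$ on erased terms, and $\lemb{\emb-} = \s{id}$ (checked on signature items, then transferred to the syntax by initiality) makes $\emb-$ a section, which is exactly the trivial-fibration argument the paper invokes; your strictness worry about $\lemb{\Gamma \rhd \#}$ does not really arise, because the identity is established at the second-order level, where $\#\mapsto\std 1$ and $(\std 1 \to X)\simeq X$ holds definitionally, and the first-order statement then follows by functoriality and initiality.
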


\begin{corollary}[Runtime conservativity of \ZTT over \TT{}  \agdaicon{https://cthe.me/redirects/erasure/rt-cons}]\label{thm:rt-cons}
  \ZTT's runtime fragment is conservative over \TT: given a \ZTT context $\Gamma'$
  and type $A'$ such that $0\Gamma' = \emb \Gamma$ and $0A' = \emb A$, there is a natural map
  $\re 0_{\ZTT}.\Tm_\omega\ \Gamma'\ A' \to \re 0_\TT.\Tm\ \Gamma\ A$.
\end{corollary}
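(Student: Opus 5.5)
The plan is to reduce runtime conservativity to erased conservativity (\cref{thm:erased-cons}) by passing through the zeroing model. Given $t : \re 0_{\ZTT}.\Tm_\omega\ \Gamma'\ A'$, first apply the zeroing morphism $\llbracket-\rrbracket_0$ of \cref{def:zeroing-model}. Its action on runtime terms is $0_{\Tm_\omega} : \Tm_\omega\ \Gamma'\ A' \to \Tm_0\ 0\Gamma'\ 0A'$. Under the hypotheses $0\Gamma' = \emb\Gamma$ and $0A' = \emb A$, transporting along these equalities gives an element of $\re 0_{\ZTT}.\Tm_0\ \emb\Gamma\ \emb A$. Now apply the natural map of \cref{thm:erased-cons}, which concretely is the action of $\lemb-$ on terms followed by transport along $\lemb{\emb\Gamma} = \Gamma$ and $\lemb{\emb A} = A$ (using that $\lemb{\emb-}$ is the identity). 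This lands in $\re 0_\TT.\Tm\ \Gamma\ A$. The desired map is the composite $t \mapsto \lemb{0t}$, suitably transported.

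Naturality has to be stated carefully, since $\Gamma'$ is constrained by $0\Gamma' = \emb\Gamma$. The relevant substitutions are $\sigma' : \Sub\ \Delta'\ \Gamma'$ lying over some $\sigma : \Sub\ \Delta\ \Gamma$ in the sense that $0\sigma' = \emb\sigma$. Each step commutes with such substitutions:
\begin{itemize}
\item $\llbracket-\rrbracket_0$ is a morphism of $\ZTT^{\s{fo}}$-models, so it preserves $-[-]$ strictly: $0(t[\sigma']) = (0t)[0\sigma']$.
\item $\lemb-$ likewise preserves substitution, and the map of \cref{thm:erased-cons} is natural.
\end{itemize}
Composing these two squares, and using $0\sigma' = \emb\sigma$ together with $\lemb{\emb\sigma} = \sigma$, yields naturality.

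The only genuinely delicate point is bookkeeping of the transports. The equalities $0\Gamma' = \emb\Gamma$ and $0A' = \emb A$ are equalities in the initial model, a quotient inductive-inductive type with UIP. Transport along them is therefore well defined and coherent, and it commutes with substitution by naturality of $\llbracket-\rrbracket_0$. I expect this coherence, rather than any computation, to be the main obstacle in the formalisation. Surjectivity is not claimed here, which is fortunate: $0_{\Tm_\omega}$ is not obviously onto, since runtime terms are fewer than erased terms. So only existence and naturality of the composite need to be shown.
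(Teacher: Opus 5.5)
Your proposal is correct and matches the paper's route: \cref{thm:rt-cons} is a corollary of \cref{thm:erased-cons} via zeroing. The map is $0_{\Tm_\omega}$, transported along $0\Gamma' = \emb\Gamma$ and $0A' = \emb A$, followed by the erased-conservativity map, i.e.\ $t \mapsto \lemb{0t}$, with naturality inherited from $\llbracket-\rrbracket_0$ and $\lemb-$ both preserving substitution.
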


We might be tempted to ask for a stronger conservativity result for erased
terms, namely that this map is actually an isomorphism. This is not possible if
we have mode-aware binder types. For example, consider for each mode $i$ the
term $\s{pair}\ (\s{code}\ (\Pi_i\ \s{Nat}\ \s{Nat}))\ (\s{lam}_0\ \s{q}_0) :
\Tm_0\ {\bullet}\ (\Sigma_0\ \s{U}\ (\s{El}\ \s{q}_0))$. For either choice of
$i$ we get the same \TT term, so $\lemb-$ is not injective even on erased terms.

\section{Standard models} \label{ssub:std-model}

The standard semantics of Martin-L\"of type theory are in $\Set$, and can be
generalised to presheaf categories \cite{Hofmann1997-on} or Grothendieck toposes
\cite{Gratzer2022-gn}. Now we explore standard models for $\ZTT$, first at the
level of sets, and then briefly at the generality of Grothendieck toposes.
The utility of such models, besides showing that the theory can be modelled by known
and common mathematical objects, lies in their use when proving metatheorems
via gluing \cite{Kaposi2019-lp}.

The involvement of a phase distinction in \ZTT requires the standard semantics
to take place not just in plain sets, but rather in some kind of
\emph{phase-separated} sets. The isomorphism $(\# \to \Tm_\omega\ A) \simeq
\Tm_0\ A$ shows us that in the semantics we must have an object $\#$ such that
exponentiating by $\#$ \emph{progresses} the phase from runtime to erased. It is
already known that languages with phase separations have semantics in
glued categories \cite{Sterling2022-ym}. A glued category is a comma category of
the form $\s{Id}_C \downarrow F$ for a functor $F : D \to C$. In the simplest
case, we take $C = D = \Set$ and $F = \s{Id}_{\Set}$. This yields the arrow
category of $\Set$, which is the category of presheaves over the interval $0 \to
1$, and also equivalent to the category of families of sets $\s{Fam}(\Set)$.
We choose to work with $\s{Fam}(\Set)$, using its internal language when convenient.

The category $\s{Fam}(\Set)$ serves as the simplest standard model of \ZTT,
which we denote by $\cl S$. For a family $(X_0 : \Set) \times (X_1 : X_0
\to \Set)$, the base $X_0$ stores the erased data, and the fibers store the
runtime data. Consider the object $\phi \triangleq (\std{1}, \lambda\_.\
\std{0})$, the family with a single empty fiber. It is a proposition in the
sense that any two maps into $\phi$ are equal. Exponentiation of $X = (X_0,
X_1)$ by $\phi$ acts as $(\phi \to X) \simeq (X_0,\ \lambda\_.\ \std{1})$, so
maps out of $\phi$ isolate the base component, trivialising the fibers. This
suggests that the erasure marker $\#$ should be interpreted as $\phi$. Because
we are now in a semantic setting, we can interpret erased terms directly as
maps from $\phi$ to runtime terms: $\cl S.\Tm_0\ A = (\phi \to \cl S.\Tm_\omega\ A)$.

\begin{definition}[$P$-modal]\label{def:phi-modal}
  Given a proposition $P$, an object $X$ is \emph{$P$-modal} if the
  weakening map $\s p : X \to (P \to X)$ defined by $\s p\ x\ \_ = x$
  is an isomorphism.
\end{definition}

In $\s{Fam}(\Set)$, an object is $\phi$-modal if its fibers are contractible. A
consequence of this approach is that any part of the language that is erased
should be $\phi$-modal. This notably includes universes. The standard
Hofmann-Streicher \cite{Hofmann1997-pt} universe construction can be performed
in $\s{Fam}(\Set)$, yielding the universe $\univ_\ell \triangleq (\Set_\ell,
\lambda A.\ A \to \Set_\ell)$ for $\ell < \omega$. However, this is not
$\phi$-modal, since its fibers are not contractible. We need an alternative
universe construction. Let us suggestively denote this by
$\amazing{\phi}{\univ_\ell}$ at level $\ell$. We will revisit this notation in
\cref{ssub:mod-gro}. For the $\s{El}/\s{code}$ isomorphism of \ZTT, we have:
\[
  \cl S.\s U_\ell \triangleq \amazing{\phi}{\univ_\ell} \qquad
  \cl S.\s{code}\ (A :  \amazing{\phi}{\univ_\ell}) \triangleq \s{p}\ A : \phi \to  \amazing{\phi}{\univ_\ell}  \qquad
  \cl S.\s{El}\ (c : \phi \to  \amazing{\phi}{\univ_\ell}) \triangleq \boxed{?} : \amazing{\phi}{\univ_\ell} 
\]
An assignment to $\boxed{?}$ such that $\s{El}$ and $\s{code}$ form an isomorphism
is possible when $\amazing{\phi}{\univ_\ell}$ is $\phi$-modal.
Luckily, $\s{Fam}(\Set)$ supports a universe which is $\phi$-modal:
\[
  \amazing{\phi}{\univ_\ell} \triangleq ((A : \Set_\ell) \times (A \to \Set_\ell), \lambda\_.\std 1)
\]
The decoding map of this universe is of the form $[-] :
\amazing{\phi}{\univ_\ell} \to \univ_\ell$, and is defined by first projection
for the base, and second projection for the fibers. It supports all base types
of $\univ_\ell$, and is closed under dependent products and sums. We call this
universe \emph{squashed} because it is $\phi$-modal but still retains all the
original structure. Now we are ready to define the full model:

\begin{definition}[$\s{Fam}(\Set)$ model of \ZTT{} \agdaicon{https://cthe.me/redirects/erasure/famset-so}]\label{def:std-model} The standard
 model of \ZTT in families of sets is $\cl S : \ZTT \to \re{Psh}_{\omega + 1}(\s{Fam}(\Set))$,
 given by:
  \[
  \begin{array}{ll@{\qquad\qquad}ll}
    \cl S.\Ty_\ell &\triangleq \amazing{\phi}{\univ_\ell} &
    \cl S.\Tm_0\ A &\triangleq  \phi \to [A] \\
    \cl S.\# &\triangleq \phi &
    \cl S.\Tm_\omega\ A &\triangleq [A]
  \end{array}
  \]
For erased functions we interpret $\cl S.\Pi_0\ A\ B \triangleq \re (a : (\phi \to A))
\to B\ a$ and for runtime functions we interpret $\cl S.\Pi_\omega\ A\ B \triangleq
\re (a : A) \to B\ (\s p\ a)$, both using the function type in
$\amazing{\phi}{\univ_\ell}$. Because $B : (\phi \to A) \to
\amazing{\phi}{\univ_\ell}$ in both cases, we must use the weakening map $\s p$
in the $\omega$ case to `forget' the runtime data of $a$.
\end{definition}

\subparagraph{The $\s{Fam}(\Set)$ model, explicitly (\agdaicon{https://cthe.me/redirects/erasure/famset-fo})}\label{ssub:famset}

We expand $\cl S$ in first-order morphism form: each context $\Gamma$ is interpreted as a set
$\llb\Gamma^0_{\cl S} : \Set$ (erased phase) and a family $\llb\Gamma^1_{\cl S} : \llb\Gamma^0_{\cl S} \to \Set$
(runtime phase). A substitution $\Gamma \to \Delta$ is a function $\sigma_0 :
\llb\Gamma^0_{\cl S} \to \llb\Delta^0_{\cl S}$ (erased) and a family of functions $\sigma_1 : \forall \gamma_0.\
\llb\Gamma^1_{\cl S}\ \gamma_0 \to \llb\Delta^1_{\cl S}\ (\sigma_0\ \gamma_0)$ (runtime), displayed over
the erased function. An $\ell$-type in context $\Gamma$ is interpreted as a
family of $\ell$-sets $\llb\Gamma^0_{\cl S} \to (A_0 : \Set_\ell) \times (A_1 : A_0 \to
\Set_\ell)$ indexed only over $\llb\Gamma^0_{\cl S}$. This is the result of expanding a map
into the squashed universe. An erased term of type $A$ in context $\Gamma$ is
interpreted as a section of the erased components $(\gamma_0 : \llb\Gamma^0_{\cl S}) \to
\llb A^0_{\cl S}\ \gamma_0$, while a runtime term is interpreted as a full section $(\sigma_0
: (\gamma_0 : \llb\Gamma^0_{\cl S}) \to \llb A^0_{\cl S}\ \gamma_0) \times (\sigma_1 : \forall \gamma_0.\
\llb\Gamma^1_{\cl S}\ \gamma_0 \to \llb A^1_{\cl S}\ (\sigma_0\ \gamma_0))$. The sort $\# \in \Gamma$ is
interpreted as $\llb\Gamma_{\cl S} \to \phi$ which is equivalent to $\forall
\gamma_0.\ \llb\Gamma^1_{\cl S}\ \gamma_0 \to \std{0}$; the assertion that the runtime part
of the context is uninhabited. Finally, we can compute the context formers as:
\[\begin{array}{l@{\ }l@{\ }l}
\llb{\bullet}_{\cl S} &\triangleq \std{1}
  &\simeq (\std{1}, \lambda\_.\ \std{1}) \\
\llb{\Gamma \rhd_0 A}_{\cl S} &\triangleq (\gamma : \llb\Gamma_{\cl S}) \times (\phi \to \llb A_{\cl S}\ (\s{p}\ \gamma))
  &\simeq ((\gamma_0 : \llb\Gamma^0_{\cl S}) \times \llb A^0_{\cl S}\ \gamma_0,\ \lambda (\gamma_0, a_0).\ \llb\Gamma^1_{\cl S}\ \gamma_0) \\
\llb{\Gamma \rhd_\omega A}_{\cl S} &\triangleq (\gamma : \llb\Gamma_{\cl S}) \times \llb A_{\cl S}\ (\s{p}\ \gamma)
  &\simeq ((\gamma_0 : \llb\Gamma^0_{\cl S}) \times \llb A^0_{\cl S}\ \gamma_0,\\
  &&\quad \lambda (\gamma_0, a_0).\ (\gamma_1 : \llb\Gamma^1_{\cl S}\ \gamma_0) \times \llb A^1_{\cl S}\ \gamma_0\ a_0) \\
\llb{\Gamma \rhd \#}_{\cl S} &\triangleq \llb\Gamma_{\cl S} \times \phi &\simeq (\llb\Gamma^0_{\cl S}, \lambda\_.\ \std{0})
\end{array}\]
Erased context extension only extends the erased part of the context, runtime
context extension extends both parts, and adding $\#$ makes the runtime part
empty.

\subparagraph{Models in Grothendieck toposes} \label{ssub:mod-gro}

The construction of $\cl S$ can be generalised beyond
$\s{Fam}(\Set)$, to an arbitrary Grothendieck topos $\re G$ that supports
squashed universes. Gratzer, Shulman and Sterling \cite{Gratzer2022-gn} have
shown (classically) that any Grothendieck topos $\re G$ admits a lifting of
$\{\Set_\ell\}_{\ell < \omega}$ where each $\univ_\ell$ contains all
$\ell$-small type families; Streicher~\cite{Streicher2005-tm} showed this
constructively for presheaf toposes. So all Grothendieck toposes support
universes. Which support squashed universes? We have written
$\amazing{\phi}{\univ}$ to imply that $\amazing{\phi}{-}$ is a functor. For
$\s{Fam}(\Set)$, it takes an object $X \triangleq (X_0, X_1)$ to $((x : X_0)
\times X_1\ x,\ \lambda\_.\std 1)$. It is uniquely characterised by the fact
that it is the \emph{right adjoint} to exponentiation by $\phi$:
$$
(\phi \to -) \dashv \amazing{\phi}{-}
$$
This construction has been studied in the context of type theory before
\cite{Licata2018-fe,Nuyts2020-pu,Sterling2021-kp,Riley2024-zk}, frequently
denoted by the square root $\sqrt{-}$ symbol. When a proposition $\phi$ has a
right adjoint, it is called \emph{tiny}. Therefore, if $\re G$ has a tiny
proposition $\phi$, it supports squashed universes with respect to $\phi$. This
has been observed by Sterling \cite{Sterling2023-jx} for essentially the same
purpose. This is not an overly restrictive condition either: in a presheaf topos
$\Psh{C}$, a representable $\yoneda X$ is tiny whenever $C$ has products
with $X$. As a result, any Grothendieck topos with a tiny proposition supports a
model of \ZTT. We leave spelling out the details of the general construction for
future work. This, along with the satisfaction of the realignment axiom
\cite{Gratzer2022-gn}, would justify the use of synthetic Tait computability
\cite{Sterling2022-ym} for the metatheory of \ZTT.

\section{Code extraction}\label{sub:code-extr}

The main purpose of \ZTT is to provide a practical language for programming with
dependent types, so it is useful to be able to extract executable code from \ZTT
programs. In particular, the code extraction process should erase all erased
terms, and preserve the computational behaviour of runtime terms. In this
section, we show how to extract code from $\ZTT$ programs by interpreting into a
\emph{presheaf model} of \ZTT over the untyped lambda calculus.

\begin{definition}[Untyped lambda calculus]\label{def:code-extr}
The untyped lambda calculus $\lambda$ quotiented by $\beta\eta$-equality is a SOGAT given by
\cref{eq:lc-sogat}. Its initial GAT model is the CwF $\re 0_\lambda$, where contexts
are natural numbers, substitutions $\re 0_\lambda.\Sub\ n\ m$ are $m$-tuples of lambda
terms with $n$ free variables, types are trivial (a single type $\star$), and terms
$\re 0_\lambda.\Tm\ n$ are untyped lambda terms with $n$ free variables \cite{Castellan2019-sh}.
This CwF supports $\Pi$ types by lambda abstraction and
application, and (positive) $\Sigma$ types, natural numbers, and other data types by
Church encoding.
\end{definition}

\begin{definition}[Code extraction model of \ZTT{} \agdaicon{https://cthe.me/redirects/erasure/extraction}]
  The code extraction model $\cl{E}$ interprets \ZTT into the
  base category of presheaves over the syntax of the untyped lambda calculus,
  $\Psh{\re 0_\lambda}$, meaning it is a $(\Sigma,\Pi_{\s R})$-CwF morphism
  $\cl E : \ZTT \to \re{Psh}_{\omega + 1}(\Psh{\re 0_\lambda})$ where:
  \[
    \begin{array}{c@{\qquad\qquad}c}
      \begin{array}{l@{\ }l}
        \cl E.\Ty_\ell &\triangleq \std{1} \\
        \cl E.\# &\triangleq \std{0}
      \end{array}
      &
      \begin{array}{l@{\ }l}
        \cl E.\Tm_0\ A &\triangleq \std{1} \\
        \cl E.\Tm_\omega\ A &\triangleq \yoneda {\re 0_\lambda}.\Tm
      \end{array}
    \end{array}
  \]
We interpret the \emph{runtime} (mode $\omega$) $\Pi$ and $\Sigma$
as the corresponding Church-encoded untyped structures in $\Psh {\re 0_\lambda}$, while the \emph{erased}
ones disappear. For example, we have
\[
\cl E.\Pi_0\ A\ B \triangleq \star \quad \text{(all types are trivial)} \qquad
\cl E.\s{lam}_0\ t \triangleq t \, \star \qquad \cl E.\s{app}_0\ f\ a \triangleq f
\]
The erasure coercions $\uptm$ and $\downtm$ reduce to the \emph{ex falso quodlibet} principle and
the terminal map respectively.
Universes disappear as well, since they only exist in the erased fragment.
\end{definition}

The double presheaf codomain is needed because we map the representable sort
$\#$ to $\mathbb{0}$ which is not representable in $\Psh{\re 0_\lambda}$ (there
is no $\re 0_\lambda$-context $n$ such that $\yoneda n \simeq \std 0$).\footnote{Alternatively, we could form a
\emph{higher-order model}, followed by contextualisation \cite{Bocquet2023-bu}
to get a GAT model in $\Psh{\re 0 _\lambda}$.} Upon unfolding the GAT model
$\widetilde{\cl E} : \ZTT^{\s{fo}} \to \Set$ corresponding to the
above, we can compute that contexts are ${\re 0_\lambda}$-presheaves, where
context extensions are interpreted as:
\[\begin{array}{c@{\qquad}c}
  \begin{array}{l@{\ }l}
    \llb{\bullet}_{\cl E}\ n & = \std{1}  \\
    \llb{\Gamma \rhd_\omega A}_{\cl E}\ n & = \llb\Gamma_{\cl E}\ n \times {\re 0_\lambda}.\Tm\ n
  \end{array}
  &
  \begin{array}{l@{\ }l}
    \llb{\Gamma \rhd_0 A}_{\cl E}\ n & = \llb\Gamma_{\cl E}\ n \times \std{1}\ (\simeq \llb\Gamma_{\cl E}\ n) \\
    \llb{\Gamma \rhd \#}_{\cl E}\ n & = \llb\Gamma_{\cl E}\ n \times \std{0}\ (\simeq \std 0)
  \end{array}
\end{array}\]
Up to isomorphism, adding an erased variable does nothing, adding a runtime
variable adds a lambda term, and adding an erasure marker makes the context
uninhabited.
  
To extract a program from a closed \ZTT term $t : \re 0_{\ZTT}.\Tm_\omega\ {\bullet}\ A$,
we interpret it in the code extraction model to get a closed lambda term:
\[\begin{aligned}
  \llb{t}_{\cl E} :&\ \cl{E}.\Tm_{\omega}\ \llb{\bullet}_{\cl E}\ \llb{A}_{\cl E}
    = \s{Hom}_{\Psh {\re 0_\lambda}}(\std{1}, {\re 0_\lambda}.\Tm)
    \simeq \s{Hom}_{\Psh {\re 0_\lambda}}(\yoneda 0, {\re 0_\lambda}.\Tm)
    \simeq \re 0_\lambda.\Tm\ 0 \,.
\end{aligned}\]
To interpret open terms, we observe that syntactic contexts which do not contain an
erasure marker are representable in $\Psh{\re 0_\lambda}$.
\begin{lemma}
  If $\# \not\in \Gamma$ (in \ZTT syntax), then $\llb\Gamma_{\cl E}$
  is representable -- there is a natural number $c_\Gamma$ which
  counts the runtime bindings in $\Gamma$, satisfying
  \[
    \llb\Gamma_{\cl E} \simeq \yoneda c_\Gamma \,.
  \]
  \begin{proof}
    By induction on contexts $\Gamma : \re 0_{\ZTT}.\Con$.
  \end{proof}
\end{lemma}
\begin{corollary}
  For $\# \not\in \Gamma$, there is an extraction map $|-| : \re 0_{\ZTT}.\Tm_\omega\ \Gamma\ A 
   \to \re 0_{\lambda}.\Tm\ c_\Gamma$.
  \begin{proof}
  \[\begin{aligned}
    \llb{t}_{\cl E} :\ \cl{E}.\Tm_{\omega}\ \llb\Gamma_{\cl E}\ \llb A_{\cl E}
      &= \s{Hom}_{\Psh{\re 0_\lambda}}(\llb\Gamma_{\cl E}, {\re 0_\lambda}.\Tm) \\
      &\simeq \s{Hom}_{\Psh{\re 0_\lambda}}(\yoneda c_\Gamma, {\re 0_\lambda}.\Tm) \\
      &\simeq {\re 0_\lambda}.\Tm\ c_\Gamma \,.
  \end{aligned}\]
  \end{proof}
\end{corollary}

This is the `purest' code extraction model we can formulate; in practice, we
would choose a richer untyped target that includes primitives for pairs and
inductive types (and one that would support full negative pairs), but its construction
would be entirely analogous.

\subsection{Correctness of code extraction}\label{sub:code-extr-corr}

To show that code extraction preserves the computational behaviour of \ZTT
programs, we set up a logical relation between the code extraction model $\cl E$
(\cref{def:code-extr}) and the logical interpretation in $\Set$. We do so by
building a model $\s{Gl}(F)$ of \ZTT extended with natural numbers $\s{Nat}$ in
a glued category. This model is given by gluing along a morphism of
$\ZTT$-models $F : \re 0_{\ZTT} \to \cl S$. It is the result of combining two
other morphisms. The \emph{first} morphism is the composite
\[
\begin{tikzcd}
  \re 0_{\ZTT} \ar[r, "{{\llb-_{\cl E}}}"] & \cl E \ar[rr, "P \mapsto P\,0"]
  & & \cl S_{\Set}
\end{tikzcd}
\]
which evaluates a $\re 0_\lambda$-presheaf produced by code
extraction at the empty context $0$ (in other words, the global sections pseudo-morphism \cite{Kaposi2019-lp}).
Its target is $\cl S_\Set$, the standard model of \ZTT valued in \Set rather than $\s{Fam}(\Set)$ with $\#$
interpreted by the singleton set $\Unit$.
The \emph{second} morphism is the interpretation morphism $\llb{-}_{\cl S_{\Set}} : \re 0_{\ZTT}
\to \cl S_\Set$ itself. These two define the morphism $F$,
valued in the standard $\s{Fam}(\Set)$ model $\cl S$, via:
\[
\begin{tikzcd}
  \re 0_{\ZTT} \ar[rrrr, "{\Gamma \mapsto (\llb{\Gamma}_{\cl S_\Set}, \lambda \_.\ \llb{\Gamma}_{\cl E} 0)}"] & & & & \cl S
\end{tikzcd}
\]
sending a context $\Gamma$ to the family $F \Gamma \triangleq (\llb{\Gamma}_{\cl S_\Set}, \lambda
\_.\ \llb{\Gamma}_{\cl E}\ 0)$. The standard $\Set$ interpretation is at
the base of each object, and the code extraction is at the fibers. From $F$,
we can construct a \emph{displayed} $\ZTT$ model $\s{Gl}(F)$ analogously
to the construction of Kaposi et~al. \cite{Kaposi2019-lp}.

The underlying category of $\s{Gl}(F)$ is $\s{Fam}(\Set) \downarrow F$
(where $F$ really means the \emph{underlying functor} between categories of
contexts). Each context in $\s{Gl}(F)$ consists of:
\begin{itemize}
\item a syntactic context $\Gamma : \re 0_{\ZTT}.\Con$
\item a `base' predicate $\Gamma^{\approx}_0 : \llb{\Gamma}_{\cl S_\Set} \to \Set_{\omega}$
\item a `fiber' predicate $\Gamma^{\approx}_1 : (\gamma_{\cl S} : \llb{\Gamma}_{\cl S_\Set}) \to \Gamma^{\approx}_0\ \gamma_{\cl S} \to \llb{\Gamma}_{\cl E}\ 0 \to \Set_{\omega}$
\end{itemize}
This comes with an evident projection morphism $(\Gamma, \Gamma^{\approx}_0, \Gamma^{\approx}_1)
\mapsto \Gamma$ into the syntax $\re 0_{\ZTT}$, which by initiality has a
section. This can be approximately thought of as a single binary relation
that relates the set interpretation with code extraction. The nuance is that,
in order to handle universes correctly, which are erased but must still carry
logical predicate data, we need a base predicate $\Gamma^{\approx}_0$ which exists even in
erased contexts to store them.

The sorts of the displayed model correspond to the induction
motives of the logical relation, which are presented in \cref{fig:logrel}.
\begin{figure}[h]
{$$\begin{array}{ll}
  (\Gamma : \s{Con})^{\approx} &:
    (\Gamma^{\approx}_0 : \llb\Gamma_{\cl S_\Set} \to \Set_{\omega}) \times (\Gamma^{\approx}_1 : \forall \gamma_{\cl S}.\ \llb\Gamma_{\cl E}\ 0
    \to \Gamma^{\approx}_0\ \gamma_{\cl S} \to \Set_{\omega}) \\[0.5em]
  (\sigma : \s{Sub}\ \Gamma\ \Delta)^{\approx} &:
    (\sigma^{\approx}_0 : \forall \gamma_{\cl S}.\ \Gamma^{\approx}_0\ \gamma_{\cl S} \to \Delta^{\approx}_0\ (\llb\sigma_{\cl S_\Set}\ \gamma_{\cl S})) \\
    &\qquad \times\ (\sigma^{\approx}_1 : \forall \gamma_{\cl E},\gamma_0.\ \Gamma^{\approx}_1\ \gamma_{\cl E}\ \gamma_0
    \to \Delta^{\approx}_1\ (\llb\sigma_{\cl E}\ 0\ \gamma_{\cl E})\ (\sigma^{\approx}_0\ \gamma_0)) \\[0.5em]
  (A : \s{Ty}_\ell\ \Gamma)^{\approx} &:
    (A^{\approx}_0 : \forall \gamma_{\cl S}.\ \Gamma^{\approx}_0\ \gamma_{\cl S} \to \llb A_{\cl S_\Set}\ \gamma_{\cl S} \to \Set_{\ell}) \\
    &\qquad \times\ (A^{\approx}_1 : \forall \gamma_0,a_{\cl S}.\ \re 0_\lambda.\s{Tm}\ {0}
    \to A^{\approx}_0\ \gamma_0\ a_{\cl S} \to \Set_{\ell}) \\[0.5em]
  (a : \s{Tm}_0\ \Gamma\ A)^{\approx} &:
    \forall \gamma_{\cl S}.\ (\gamma_0 : \Gamma^{\approx}_0\ \gamma_{\cl S}) \to A^{\approx}_0\ \gamma_0\ (\llb a_{\cl S_\Set}\ \gamma_{\cl S}) \\[0.5em]
  (a : \s{Tm}_\omega\ \Gamma\ A)^{\approx} &:
    (a^{\approx}_0 : \forall \gamma_{\cl S}.\ (\gamma_0 : \Gamma^{\approx}_0\ \gamma_{\cl S}) \to A^{\approx}_0\ \gamma_0\ (\llb a_{\cl S_\Set}\ \gamma_{\cl S})) \\
    &\qquad \times\ (a^{\approx}_1 : \forall \gamma_{\cl E},\gamma_0.\ \Gamma^{\approx}_1\ \gamma_{\cl E}\ \gamma_0
    \to A^{\approx}_1\ (\llb a_{\cl E}\ 0\ \gamma_{\cl E})\ (a^{\approx}_0\ \gamma_0)) \\[0.5em]
  (p : \#\in \Gamma)^{\approx} &: \llb\Gamma_{\cl E}\ 0 \to \std{0}
\end{array}$$}
\caption{The motives of the logical relation for code extraction correctness.}
\label{fig:logrel}
\end{figure}
This can be thought of as the logical relation version of the model described in
\cref{ssub:famset}; the interpretation of types is `squashed' because it is
indexed by erased contexts only, but packs both base and fiber data. The main
interesting component for our purposes is the type of natural numbers, where we
relate the two interpretations in the fiber: 
$$
\s{Nat}^{\approx} \triangleq (\lambda \gamma_0,n_{\s s}.\
\std{1},\ \lambda \{n_{\s s}\}, n_{\s e},\star.\ n_{\s e} = \un{succ}^{n_{\s
s}}\ \un{zero})
$$
where $a^n\ b \triangleq \s{rec}_{\std{N}}\ b\ a\ n$. We omit
the rest of the interpretation; see our Agda formalisation (\agdaicon{https://cthe.me/redirects/erasure/extraction-correctness}).\footnote{The formalisation works internally to $\Psh{\re 0_\lambda}$, which allows us
to directly use a second-order model of $\lambda$ rather than closed first-order terms. We include some
notes there about its relation to this version.}
From this model we obtain various useful correctness properties of extraction. Below,
we write $|-|$ for the code extraction of closed terms, $\llb{-}$ for the $\Set$ interpretation
of closed terms, and operate purely in the syntax $\re 0_{\ZTT}$.

\begin{theorem}[Canonicity \agdaicon{https://cthe.me/redirects/erasure/extraction-correctness}]\label{thm:canon}
  Every closed term $n : \Tm_\omega\ {\bullet}\ \s{Nat}$ is extracted to the numeral of
  its set interpretation: $|n| = \un{succ}^{\llbracket n \rrbracket}\ \un{zero}$.
\end{theorem}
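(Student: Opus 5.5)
We obtain canonicity from the fundamental lemma of the glued model $\s{Gl}(F)$. Since $\s{Gl}(F)$ is a displayed $\ZTT$-model over the syntax $\re 0_{\ZTT}$, initiality gives a section of the projection $\s{Gl}(F) \to \re 0_{\ZTT}$. Concretely, every syntactic object gets a displayed counterpart: each context $\Gamma$ gets $\Gamma^{\approx}$, each type $A$ gets $A^{\approx}$, and each runtime term $t$ gets $t^{\approx} = (t^{\approx}_0, t^{\approx}_1)$ with the types in \cref{fig:logrel}. The plan is to apply this section to the closed term $n : \Tm_\omega\ {\bullet}\ \s{Nat}$ and read off its fiber component.

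The first step is to compute the displayed empty context. Because the section is a strict morphism of $\ZTT^{\s{fo}}$-models, it preserves $\bullet$. In $\s{Gl}(F)$ the terminal context is interpreted by trivial predicates: $\bullet^{\approx}_0\ \star = \std 1$ and $\bullet^{\approx}_1\ \star\ \star\ \star = \std 1$. This uses $\llb{\bullet}_{\cl S_\Set} = \std 1$ and $\llb{\bullet}_{\cl E}\ 0 = \std 1$. These inhabitants $\star$ are available without any further hypothesis.

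The second step is to use the interpretation of $\s{Nat}$ given in the text, $\s{Nat}^{\approx} = (\lambda \gamma_0, n_{\s s}.\ \std 1,\ \lambda \{n_{\s s}\}, n_{\s e}, \star.\ n_{\s e} = \un{succ}^{n_{\s s}}\ \un{zero})$. Instantiating $n^{\approx}_1$ at $\gamma_{\cl E} = \star$, $\gamma_0 = \star$ and the trivial proof of $\bullet^{\approx}_1$ gives an element of $\s{Nat}^{\approx}_1\ (\llb n_{\cl E}\ 0\ \star)\ (n^{\approx}_0\ \star)$. Here the implicit $n_{\s s}$ is $\llb n_{\cl S_\Set}\ \star$. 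That element is a proof of $\llb n_{\cl E}\ 0\ \star = \un{succ}^{\llb n_{\cl S_\Set}\ \star}\ \un{zero}$. We then rewrite along the identifications defining $|-|$ (global sections of $\llb{n}_{\cl E}$ via $\std 1 \simeq \yoneda 0$ and Yoneda) and $\llb{-}$ (the $\Set$ interpretation at the unique point of $\llb{\bullet}_{\cl S_\Set}$). This gives exactly $|n| = \un{succ}^{\llb n}\ \un{zero}$.

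The real obstacle is not this final read-off but the well-definedness of $\s{Gl}(F)$. The interpretations of all the other components have been omitted here and relegated to the formalisation. The delicate points are the following.
\begin{itemize}
\item The logical relation must be validated for the constructors and for the eliminator of $\s{Nat}$. Showing that the extracted Church-style recursor respects $n_{\s e} = \un{succ}^{n_{\s s}}\ \un{zero}$ needs induction on $n_{\s s}$ together with the $\beta\eta$ equations of $\re 0_\lambda$.
\item The erasure coercions must be validated. They are easy because $(p : \#\in\Gamma)^{\approx}$ makes the fiber empty, so $\uptm$ holds by \emph{ex falso} and $\downtm$ only keeps base data.
\item The squashed universe must be validated. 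Its codes carry predicate data in the base component $A^{\approx}_0$ even in erased contexts, and $\s{El}/\s{code}$ must form a strict isomorphism.
\end{itemize}
I would take all of this from the construction of $\s{Gl}(F)$ as given. I would check explicitly only that its $\bullet$ and $\s{Nat}$ components are the ones used above, so that canonicity follows from the fundamental lemma by the instantiation in the second step.
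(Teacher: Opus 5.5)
Your proposal is correct and follows the same route as the paper, which derives canonicity from the section of the glued model $\s{Gl}(F)$. You instantiate the fiber component $n^{\approx}_1$ at the trivial data over $\bullet$ and use $\s{Nat}^{\approx}_1$ to identify $\llb{n}_{\cl E}\ 0\ \star$ with $\un{succ}^{\llb{n}_{\cl S_\Set}\ \star}\ \un{zero}$, and like the paper you leave the well-definedness of $\s{Gl}(F)$ to the formalisation.
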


\begin{theorem}[Tracking \agdaicon{https://cthe.me/redirects/erasure/tracking}]\label{thm:track}
  The extraction of any syntactic function tracks its semantic
  interpretation -- for all $f : \Tm_\omega\ \bullet\ (\Pi\ \s{Nat}\ \s{Nat})$ and all $k : \std{N}$,
  $\un{app}\ |f|\ (\un{succ}^k\ \un{zero}) = \un{succ}^{\llb f \ k}\ \un{zero}$.
\end{theorem}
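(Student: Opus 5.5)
The plan is to derive tracking from the glued model $\s{Gl}(F)$, in the same way as \cref{thm:canon}, but now at function type instead of at $\s{Nat}$. Let $f : \Tm_\omega\ \bullet\ (\Pi\ \s{Nat}\ \s{Nat})$ and $k : \std N$.

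First I form the closed syntactic numeral $\underline{k} \triangleq \s{succ}^k\ \s{zero} : \Tm_\omega\ \bullet\ \s{Nat}$ and the application $\s{app}'\ f\ \underline{k} : \Tm_\omega\ \bullet\ \s{Nat}$. By \cref{thm:canon} applied to this closed numeral,
\[|\s{app}'\ f\ \underline{k}| = \un{succ}^{\llb{\s{app}'\ f\ \underline{k}}}\ \un{zero}.\]
The remaining work is to compute both sides compositionally.

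On the semantic side, $\llb-$ into $\cl S_\Set$ is a model morphism, so it preserves $\s{app}$. Hence $\llb{\s{app}'\ f\ \underline{k}} = \llb f\ \llb{\underline{k}}$. Since the interpretation of $\s{Nat}$ is $\std N$, with $\s{zero}$ and $\s{succ}$ sent to their standard counterparts, an induction on $k$ gives $\llb{\underline{k}} = k$. So the exponent on the right is $\llb f\ k$.

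On the extraction side, $\llb-_{\cl E}$ is also a model morphism, so $|\s{app}'\ f\ \underline{k}| = \un{app}\ |f|\ |\underline{k}|$. Here $\un{app}$ is the $\re 0_\lambda$ application, since $\cl E$ interprets runtime $\Pi_\omega$ by lambda and application, and evaluating at the empty context $0$ commutes with this. It remains to show $|\underline{k}| = \un{succ}^k\ \un{zero}$. This follows either by induction on $k$, using that $\cl E$ sends $\s{zero}$ and $\s{succ}$ to the Church-encoded $\un{zero}$ and $\un{succ}$, or directly from \cref{thm:canon} together with $\llb{\underline{k}} = k$.

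Combining the two computations gives
\[\un{app}\ |f|\ (\un{succ}^k\ \un{zero}) = \un{succ}^{\llb f\ k}\ \un{zero}.\]

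The main point needing care is the mode of $\Pi$. The statement writes $\Pi$ without a subscript, and I read it as $\Pi_\omega$. For $\Pi_0$, extraction sends $\s{app}_0$ to $f$ itself, so the left-hand side would be ill-typed. Under $\Pi_\omega$ the argument above goes through.

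A second point is that $\s{Nat}$ with its runtime constructors must be part of the theory that $\cl E$ and $\cl S_\Set$ interpret. This is exactly the extension used for $\s{Gl}(F)$, so \cref{thm:canon} is available. The Church numerals are only defined up to $\beta\eta$, but $\re 0_\lambda$ is quotiented by $\beta\eta$, so every equation above holds on the nose.

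Alternatively, one can unfold $f^{\approx}_1$ in $\s{Gl}(F)$ directly: the relatedness of $\underline{k}$ at $\s{Nat}^{\approx}$ is sent by $f$ to relatedness of the application. The route through \cref{thm:canon} avoids this bookkeeping, so I would take it.
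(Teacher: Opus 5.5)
Your argument is correct, but it takes a different route from the one the paper indicates. The paper only says these properties are obtained from $\s{Gl}(F)$; the natural reading is that tracking is read off directly at function type. The section of the glued model gives $f^{\approx}_1$, whose $\Pi_\omega$-component sends related arguments to related results. Instantiating it at the semantic argument $k$, the lambda term $\un{succ}^k\ \un{zero}$, and the reflexivity proof of $\s{Nat}^{\approx}_1$ gives the equation at once. This is the alternative you set aside at the end.

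You instead use only the $\s{Nat}$-component, in the form of \cref{thm:canon} applied to $\s{app}'\ f\ \underline{k}$. You then transport along the fact that $\llb{-}_{\cl S_{\Set}}$ and $\llb{-}_{\cl E}$ preserve $\s{app}$, at the cost of the two inductions $\llb{\underline{k}} = k$ and $|\underline{k}| = \un{succ}^k\ \un{zero}$.

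Your route treats the logical relation as a black box and needs nothing about the $\Pi_\omega$-clause, which the paper omits. The direct route, by contrast, does not depend on arguments being syntactically definable. Your reduction works only because every $k : \std N$ is the interpretation of a closed numeral. The $\Pi_\omega$-clause quantifies over all semantic arguments and all lambda terms related to them, so it extends verbatim to higher types, where canonicity alone no longer suffices. An example is tracking for $\Pi\ (\Pi\ \s{Nat}\ \s{Nat})\ \s{Nat}$ against an arbitrary function $\std N \to \std N$ and any lambda term tracking it.

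One small correction to your remark on modes: for $\Pi_0$ the left-hand side is not ill-typed, since any untyped term can be applied. The point is that it is then not the extraction of any application of $f$, because $\cl E.\s{app}_0\ f\ a = f$.
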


\begin{theorem}[Non-interference \agdaicon{https://cthe.me/redirects/erasure/non-interference}]\label{thm:non-inter}
  Every erased function is constant at runtime -- for all $f : \Tm_\omega\ \bullet\ (\Pi_0\ \s{Nat}\ \s{Nat})$, there exists a $k : \std{N}$ such that for all $x : \Tm_0\ \bullet\ \s{Nat}$,
  $|\s{app}_0\ f\ x | = | f | = \un{succ}^k\ \un{zero}$.
  Moreover, $\llbracket f \rrbracket$ is the constant function returning $k$.
\end{theorem}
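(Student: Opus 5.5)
We prove this using the glued model $\s{Gl}(F)$ together with the fundamental lemma, which is its section over the syntax, exactly as for \cref{thm:canon,thm:track}. Fix a closed runtime function $f : \Tm_\omega\ \bullet\ (\Pi_0\ \s{Nat}\ \s{Nat})$. The section assigns $f$ relational data $f^{\approx} = (f^{\approx}_0, f^{\approx}_1)$ over the empty context, whose predicates $\bullet^{\approx}_0, \bullet^{\approx}_1$ are trivially inhabited.

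\textbf{Step 1: compute the relation at an erased $\Pi$.} First I would spell out how $\s{Gl}(F)$ interprets $\Pi_0\ A\ B$, modelled on the standard interpretation $\cl S.\Pi_0\ A\ B = (a : \phi \to A) \to B\ a$.
\begin{itemize}
\item On the base, $(\Pi_0\ A\ B)^{\approx}_0\ \gamma_0\ g$ should hold when, for all $a_{\cl S}$ and all $a_0 : A^{\approx}_0\ \gamma_0\ a_{\cl S}$, we have $B^{\approx}_0\ (\gamma_0, a_0)\ (g\ a_{\cl S})$.
\item On the fiber, $(\Pi_0\ A\ B)^{\approx}_1\ e\ g_0$ should hold when, for all $a_{\cl S}$ and $a_0$, we have $B^{\approx}_1\ e\ (g_0\ a_{\cl S}\ a_0)$.
\end{itemize}
The point is that the fiber relation quantifies over erased inputs but relates the \emph{same} lambda term $e$ to every output. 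This matches $\cl E.\Pi_0 = \star$ and $\cl E.\s{app}_0\ f\ a = f$.

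\textbf{Step 2: instantiate with $\s{Nat}^{\approx}$.} Here $B^{\approx}_1\ e\ \star$ says $e = \un{succ}^{n}\ \un{zero}$, where $n$ is the semantic output. Applying $f^{\approx}_1$ to the trivial context witness, we get $|f| = \un{succ}^{\llb f\ a}\ \un{zero}$ for every $a : \std N$, since $\s{Nat}^{\approx}_0 = \std 1$ is trivially inhabited.
\begin{itemize}
\item Set $k \triangleq \llb f\ 0$, noting that $\phi \to \std N \simeq \std N$ in $\cl S_\Set$.
\item Numerals in $\re 0_\lambda$ are injective up to $\beta\eta$, which can be seen from normal forms or from a $\Set$ model of $\lambda$. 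Therefore $\llb f\ a = k$ for all $a$, so $\llb f$ is constant at $k$.
\end{itemize}

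\textbf{Step 3: the application equation.} For any $x : \Tm_0\ \bullet\ \s{Nat}$, the definition $\cl E.\s{app}_0\ f\ x \triangleq f$ gives $|\s{app}_0\ f\ x| = |f|$ directly from the extraction model. Combined with Step 2 this yields $|\s{app}_0\ f\ x| = |f| = \un{succ}^k\ \un{zero}$.

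\textbf{Main obstacle.} I expect Step 1 to be the hardest part: checking that the proposed interpretation of $\Pi_0$ in $\s{Gl}(F)$ really supports $\s{lam}/\s{app}$ with $\beta\eta$. The difficulty is the interaction with $\#$, which gives $\uptm$ a relational meaning through the $\std 0$ fiber, and with erased context extension $\rhd_0$, which must leave $\Gamma^{\approx}_1$ unchanged. I would take this verification from the formalised construction of $\s{Gl}(F)$ and only make the instantiation explicit.
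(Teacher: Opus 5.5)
Your proposal is correct and is essentially the paper's argument: non-interference is read off from the section of the glued model $\s{Gl}(F)$, where the fiber relation at $\Pi_0$ ties the single extracted term $|f|$ to $\un{succ}^{\llbracket f \rrbracket\, a}\ \un{zero}$ for every $a$, and $|\s{app}_0\ f\ x| = |f|$ follows directly from $\cl E.\s{app}_0$. One small correction: justify injectivity of numerals by their distinct $\beta\eta$-normal forms together with Church--Rosser, not by a $\Set$ model of $\lambda$, because any such model must satisfy $D \cong (D \to D)$ and is therefore trivial, identifying all numerals.
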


\section{Implementation} \label{sec:implementation}

We have implemented a toy elaborator for type theory
with erasure. This is based on Andr\'as Kov\'acs' \texttt{elaboration-zoo} which
contains toy implementations of dependent type theory. Our implementation is a
modification of the implementation of implicit arguments and metavariables
\cite{KovacsUnknown-sx}. The surface language is essentially the
language presented in \cref{sec:informal-itt}, with mode-aware $\Pi$ types and a
single universe $\s U : \s U$. The coercions $\uptm$/$\downtm$ and the marker
$\#$ are inserted automatically; the user never interacts with them.

In the repository, we include two variants of the elaboration algorithm: one
which \emph{inserts} coercions during elaboration, and one which keeps coercions
implicit. The latter is closer to existing implementations of
erasure. We keep the former as a proof of concept that it is possible to have a
structural phase distinction which can be elaborated from a `substructural'
source language. In both cases, we make a simplification to the representation
of contexts: the theory \ZTT as presented features context extensions by $\#$
which can end up anywhere in the context, and can appear multiple times.
However, $\#$ is a proposition, so it doesn't matter which particular witness we
use. It is therefore sufficient to keep a boolean flag indicating the mere
\emph{presence} of $\#$ in a context otherwise containing only 0/$\omega$
bindings. This simplifies the handling of variables, since we don't need to
offset de~Bruijn indices/levels by $\#$.

The predominant source of complexity in elaborating such languages is the
pattern unification algorithm that solves metavariables. Although there are
theoretical foundations of pattern unification for type theory
\cite{Abel2011-sx}, such a formal analysis has not been performed for languages
with erasure or other modalities. The implementations of pattern unification in
the wild are thus `engineering efforts', which can go wrong. Despite checking
quantities only \emph{after} the whole program has been elaborated, Idris 2
sometimes solves metas in a weaker quantity than required, which can lead to
undefined behaviour at runtime \cite{Dunham2024-up}. On the other hand, Agda
sometimes fails to detect unsolvable metas in the presence of erasure
\cite{Cockx2021-mq}. 

Luckily, because our implementation is based on a structural core, we can
directly reuse the theory of pattern unification to obtain a correct
implementation. Our unification algorithm does not require a separate mode check
after typechecking as opposed to Idris. It also does not need a dedicated
generalisation mechanism for promoting erased metavariables to runtime
metavariables as opposed to Agda. In the repository, we include some test cases
that otherwise fail in these languages due to the incompleteness of their
respective mechanisms (they are the test cases in the issues linked above). We
have justified our formulation of pattern unification by some semi-formal notes
in the same repository. The core idea is that the process of renaming and
performing occurrence checking on candidate solutions handles not only regular
variables, but also witnesses of the erasure marker $\#$ (which are morally also
just variables). As a result, we only need one kind of metavariable (runtime)
and generalisation is no longer necessary.

\section{Related work}\label{sec:related}

Our approach to erasure is based on \emph{synthetic phase distinctions},
pioneered by Sterling and Harper \cite{Sterling2021-pm}, whose roots go back to
the work on phase distinctions of Cardelli \cite{Cardelli1988-zp} and Harper,
Mitchell, and Moggi \cite{Harper1990-el}. Cardelli's work is the closest to erasure in terms
of purpose, but is formulated as an `indexed' type theory. In his thesis
\cite{Sterling2022-ym}, Sterling develops a theory of synthetic phase
distinctions as a tool for constructing logical relations (synthetic Tait
computability), but with various applications to programming languages. Most
recently, Grodin et~al.~\cite{Grodin2025-cl} have showcased the possible use
cases of a language with phase distinctions. In such settings one has access to
open as well as closed modalities, corresponding to open and closed subtoposes
of the topos in which the language has semantics. Using this terminology, our
formulation of erasure is the open modality for the proposition $\#$. The
closest work along these lines to ours is in the form of a blog post by Sterling
\cite{Sterling2023-jx}, which contains some ideas about how synthetic Tait
computability relates to QTT; in particular, he observes the need for squashed
universes.

Erasure in dependent types was explored by Mishra-Linger
et~al.~\cite{Mishra-Linger2008-zy} in the context of \emph{pure type systems}.
With the work of Gundry and McBride \cite{Gundry2013-px} as precursor, the
modern approach to erasure has been QTT by McBride \cite{McBride2016-oa} and
Atkey \cite{Atkey2018-pj}. We intend to characterise the relationship between
our theory and QTT, whose models are \emph{quantitative} CwFs (QCwFs), in the
future. For now, we make the observation that a \emph{structural} QCwF (meaning
with the semiring $\cl R = \{0 < \omega\}$) is an indexed CwF $U : \cl L \to \cl
C$. Gluing along the reindexing functor $U^\star : \Psh{\cl C} \to \Psh{\cl
L}$ yields a presheaf category with a tiny proposition that models \ZTT. In the
other direction, given a \ZTT-model $\cl M$, strictifying the pseudo-morphism
$\cl M \to \cl M / \#$ into the slice CwF $\cl M / \#$ yields a
structural QCwF.

More recently, Danielsson has explored some constructions in type theory with
erasure \cite{Danielsson_undated-pj}, and Abel et~al.~have explored its
integration with cubical type theory for Cubical Agda \cite{AbelUnknown-od}. We
expect that our system can extend the SOGAT of cubical type theory
\cite[4.6.3]{Uemura2021-jq} with a mode split for terms, and thus replicate
Abel's system structurally. Favier \cite{FavierUnknown-wx} has shown that
erasure behaves like an open modality in Agda, showing a synthetic Artin
fracture theorem. Besides this, there has also been work on theories with `mode
splits', notably type theory with colours \cite{Jean-Philippe2013-js}. This
style of system, where terms at each mode need not be the same, can be
replicated in our system; we are free to add equations that apply only under the
$\#$ marker, collapsing data in the erased phase that otherwise exists at the
runtime phase. 

\section{Conclusion}\label{sec:concl}

We have developed a fully structural theory of erasure using the formalism of
SOGATs, and explored various syntactic and semantic models.  
In the future, we would like to explore more extensions to the theory. The most
immediate is to support runtime types. This is relatively straightforward to add
and simplifies the semantics by avoiding the need for squashed universes; we are
mostly interested in exploring the utility of this feature for programming.
Another extension is to add more phase distinctions. Suppose we allow equations
like $\beta$-reduction only under $\#$. We could then add a second
\emph{disjoint} phase distinction $\text{\textdollar}$ to internalise the \emph{code
extraction} morphism. Doing so would allow us to control and reason about
compilation output internally to the language, for example to specify runtime
optimisations.

\bibliography{bib}

\end{document}